\definecolor{darkblue}{RGB}{0,0,158}
\definecolor{lightgrey}{RGB}{220,220,220}
\newcommand*{\rom}[1]{\mathrm{\expandafter\@slowromancap\romannumeral #1@}}
\newcommand{\PB}{p_B^*} 
\newcommand{\PA}{p_A^*} 
\newcommand{\eps}{\epsilon}
\long\def\old#1{}
\newcommand{\Pos}{\mathsf{Pos}}
\newcommand{\tr}{\mathrm{Tr}} 
\newcommand{\Pst}{{\cal{P}}^{\rm st}}
\newcommand{\epsst}{\epsilon^{\rm st}}
\newcommand{\Peps}{{\cal{P}}^{\epsilon}}
\newcommand{\calP}{{\cal{P}}}
\newcommand{\calZ}{{\cal{Z}}}
\newenvironment{varalgorithm}[1]
  {\algorithm[h] 
  	 
	}
  {\endalgorithm}
\def\RSthmtxt{theorem~}\newref{thm}{name = \RSthmtxt}}
\def\RSlemtxt{lemma~}\newref{lem}{name = \RSlemtxt}}
\theoremstyle{plain}
\newtheorem{thm}{\protect\theoremname}
\theoremstyle{definition}
\newtheorem{defn}[thm]{\protect\definitionname}
\theoremstyle{plain}
\newtheorem{assumption}[thm]{\protect\assumptionname}
\theoremstyle{remark}
\newtheorem{claim}[thm]{\protect\claimname}
\theoremstyle{plain}
\newtheorem{lyxalgorithm}[thm]{\protect\algorithmname}
\theoremstyle{plain}
\newtheorem{lem}[thm]{\protect\lemmaname}
\theoremstyle{plain}
\newtheorem{conjecture}[thm]{\protect\conjecturename}
\theoremstyle{definition}
\theoremstyle{remark}
\theoremstyle{plain}
\theoremstyle{plain}
\newtheorem{prop}[thm]{\protect\propositionname}
\theoremstyle{plain}
\newtheorem{cor}[thm]{\protect\corollaryname}
\newcommand\branchcolor[2]{{\color{#1} #2}}
\global\long\def\tr{\text{tr}}
\providecommand{\algorithmname}{Protocol}
\providecommand{\assumptionname}{Assumption}
\providecommand{\claimname}{Claim}
\providecommand{\conjecturename}{Conjecture}
\providecommand{\definitionname}{Definition}
\providecommand{\examplename}{Example}
\providecommand{\lemmaname}{Lemma}
\providecommand{\remarkname}{Remark}
\providecommand{\theoremname}{Theorem}
\providecommand{\factname}{Fact}
\providecommand{\propositionname}{Proposition}
\providecommand{\corollaryname}{Corollary}
\begin{document}
\title{Improving device-independent weak coin flipping protocols}
\author[1,2,3]{Atul Singh Arora}
\author[4,5]{Jamie Sikora}
\author[6,7,8]{Thomas Van Himbeeck}
\affil[1]{Joint Center for Quantum Information and Computer Science (QuICS), University of Maryland \& NIST, College Park, Maryland, USA}
\affil[2]{Department of Computing and Mathematical Sciences, California Institute of Technology, USA}
\affil[3]{Institute for Quantum Information and Matter, California Institute of Technology, USA}
\affil[4]{Virginia Polytechnic Institute and State University, USA} 
\affil[5]{Perimeter Institute for Theoretical Physics, Canada} 
\affil[6]{University of Toronto, Canada}
\affil[7]{Institute of Quantum Computing, University of Waterloo, Canada}
\affil[8]{T\'{e}l\'{e}com Paris, France} 
\date{April 2024}                    
\setcounter{Maxaffil}{0}
\renewcommand\Affilfont{\itshape\small}

\pagenumbering{roman} 

\maketitle
\begin{abstract}
Weak coin flipping is the cryptographic task where Alice and Bob remotely flip a coin but want opposite outcomes. 
This work studies this task in the device-independent regime where Alice and Bob neither trust each other, nor their quantum devices. 
The best protocol was devised over a decade ago by Silman, Chailloux, Aharon, Kerenidis, Pironio, and Massar with bias $\varepsilon \le 0.33664$, where the bias is a commonly adopted security measure for coin flipping protocols. This work presents two techniques to lower the bias of such protocols, namely self-testing and abort-phobic compositions. We apply these techniques to the SCAKPM '11 protocol above and, assuming a continuity conjecture, lower the bias to $\varepsilon \approx  0.29104$. We believe that these techniques could be useful in the design of device-independent protocols for a variety of other tasks. 

Independently of weak coin flipping, en route to our results, we show how one can test $n-1$ out of $n$ devices, and estimate the performance of the remaining device, for later use in the protocol. The proof uses linear programming and, due to its generality, may find applications elsewhere.

\end{abstract}

\pagebreak 
\tableofcontents

\pagebreak 

\pagenumbering{arabic} 

\section{Introduction}  
Coin flipping is the two-party cryptographic primitive where two parties, henceforth called Alice and Bob, wish to flip a coin, but %
{where---and this is the non-trivial requirement---they do not trust each other.}
This primitive was introduced by Blum~\cite{Blum} who also introduced the first (classical) protocol { achieving this functionality}. 
In this work, we concentrate on \emph{weak} coin flipping (WCF) protocols where Alice and Bob desire opposite outcomes. Since then, a series of quantum protocols were introduced with successively improved security. Mochon, in his tour de force, finally settled the question about the limits of the security in the quantum regime by proving the \emph{existence} of quantum protocols with security approaching the ideal limit~\cite{Mochon07}. 
This was followed by a flurry of results which achieved diverse cryptographic functionality assuming WCF as a black-box, such as strong coin flipping~\cite{CK09}, bit commitment~\cite{CK11}, a variant of oblivious transfer~\cite{Chailloux2013a}, leader election~\cite{Ganz2009} and dice rolling~\cite{Aharon2009}, establishing the importance of WCF in the quantum setting. 
Returning to Mochon, his work was quite technical and based on the notion of \emph{point games}, a concept introduced by Kitaev. Interestingly, his work was never published---only a preprint was available. 
Subsequently, a sequence of works have continued the study of point games. 
In particular, the proof of existence was eventually simplified and peer reviewed~\cite{ACG+14} and explicit protocols were reported after more than a decade of Mochon's work~\cite{Arora2019,ARV21}.\footnote{\cite{arora2024protocols} is the distilled concise version of these works and \cite{cryptoeprint:2022/1101} is the comprehensive version subsuming both works. Interestingly, Miller~\cite{Miller2019} used techniques from Mochon's proof to show that protocols approaching the ideal limit must have an exponentially increasing number of messages.} %
Yet, we note that all of this work is in the \emph{device-dependent} setting where \emph{Alice and Bob trust their quantum devices}. Very little is known in the \emph{device-independent} setting where a cheating player is allowed to control an honest player's quantum devices, opening up a plethora of new cheating strategies that were not considered in the previously mentioned references.  

We introduce some basic concepts to facilitate further discussion. The prefix \emph{weak} in weak coin flipping refers to the situation where Alice and Bob desire opposite outcomes of the coin. (We have occasion to discuss \emph{strong} coin flipping protocols, where Alice or Bob could try to bias the coin towards either outcome, but it is not the focus of this work.) 
When designing weak coin flipping protocols, the security goals are as follows. 
\bigskip 
\begin{center} 
\begin{tabularx}{\textwidth}{cX} 
\emph{Correctness for honest parties:} & If Alice and Bob are honest, then they share the same outcome of a protocol $c \in \{ 0, 1 \}$, and $c$ is generated uniformly at random by the protocol. \\

\emph{Soundness against cheating Alice:} & If Bob is honest, then a dishonest (i.e., cheating) Alice cannot force the outcome $c = 0$. \\
\emph{Soundness against cheating Bob:} & If Alice is honest, then a dishonest (i.e., cheating) Bob cannot force the outcome $c = 1$. 
\end{tabularx} 
\end{center}  
\bigskip 

The commonly adopted goal of two-party protocol design is to assume perfect correctness and then minimize the effects of a cheating party, i.e., to make it as sound as possible. 
This way, if no parties cheats, then the protocol at least does what it is meant to still. 
With this in mind, we need a means to quantify the effects of a cheating party. 
It is often convenient to have a single measure to determine if one protocol is better than another. 
For this purpose, we use \emph{cheating probabilities} (denoted $\PB$ and $\PA$) and \emph{bias} (denoted $\eps$), defined as follows. 

\bigskip 
\begin{center} 
\begin{tabularx}{\textwidth}{cX}
$\PA$: & The maximum probability with which a dishonest Alice can force an honest Bob to accept the outcome $c = 0$. \\ 
$\PB$: & The maximum probability with which a dishonest Bob can force an honest Alice to accept the outcome $c = 1$. \\ 
$\varepsilon$: & The maximum amount with which a dishonest party can bias the probability of the outcome away from uniform. Explicitly, $\varepsilon = \max \{ \PA, \PB \} - 1/2$. \\ 
\end{tabularx} 
\end{center}  
\bigskip 

These definitions are not complete in the sense that we have not yet specified what a cheating Alice or a cheating Bob are allowed to do, or of their capabilities.
In this work, we study \emph{information theoretic security}---Alice and Bob are only bounded by the laws of quantum mechanics. 
For example, they are not bounded by polynomial-time quantum computations. 
In addition to this, we study the security in the \emph{device-independent} regime where we assume Alice and Bob have complete control over the quantum devices when they decide to ``cheat''.

When studying device-independent (DI) protocols, one should first consider whether or not secure classical protocols are known (since these are not affected by the DI assumption). 
It was proved that every classical WCF protocol\footnote{also holds for strong coin flipping} has bias $\varepsilon = 1/2$, which is the worst possible value (see~\cite{Kitaev03,HW11}). 
Thus, it makes sense to study quantum WCF protocols in the DI setting, especially if one with bias $\varepsilon < 1/2$ can be found. Indeed, Silman, Chailloux, Aharon, Kerenidis, Pironio, and Massar presented a protocol (see \Algref{SCF}) in~\cite{Silman2011} with $p^*_A=\cos^2(\pi/8)\approx 0.853$ and $p^*_B=3/4$. We briefly discuss this protocol because we build on this result but defer the details. To this end, denote by \emph{boxes}, the spatially separable constituents of an untrusted \emph{quantum device}, each of which accepts a classical input and produces a classical output. For instance, an untrusted quantum device corresponding to the GHZ game\footnote{A GHZ game is a 3-player non-local game where each player is asked a single bit question and produces a single bit single bit answer; we review this in \Secref{FirstTechSelfTest}.} consists of three \emph{boxes}, each accepting and outputting a single bit. Returning to the protocol in ~\cite{Silman2011}, \Algref{SCF} starts with Alice possessing \emph{two} boxes and Bob possessing \emph{one} box which are together supposed to contain the GHZ state and measurements.\footnote{They specify the best quantum strategy for winning the GHZ game.} As the protocol proceeds, they, in addition to exchanging classical information, operate these boxes and exchange them.\footnote{Any protocol described using boxes is readily converted into one where Alice and Bob communicate over an insecure quantum channel; see \Secref{BoxParadigm}.} 
As is, \Algref{SCF} has bias $\varepsilon \approx 0.353$ but in \cite{Silman2011}, \Algref{SCF} is composed many times to 
lower the bias to $\varepsilon \le 0.33664$.\footnote{While here compositions are done in a specific context, universal composition of weak coin flipping protocols has recently been studied in \cite{wu2024composable}.}

\subsection{Main Result}

In this work, we provide two techniques for lowering the bias of weak coin flipping protocols and apply them to (the natural weak coin flipping variant of) \Algref{SCF}, to obtain the following. %

	\begin{thm} 
	There exist device-independent weak coin flipping protocols with bias, $\varepsilon$, approaching $0.29014$, assuming two continuity conjectures, \Conjref{Qcont,Pcont}, hold. %
	\end{thm}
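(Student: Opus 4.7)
The plan is to apply the paper's two techniques in sequence to \Algref{SCF}, whose baseline cheating probabilities are $\PA = \cos^2(\pi/8)$ and $\PB = 3/4$, giving bias $\eps \approx 0.353$. First, I would wrap \Algref{SCF} in a self-test: replace each of its three boxes by $n$ copies, have Alice and Bob jointly play the GHZ game on $n-1$ randomly chosen copies, and use the single untested copy inside \Algref{SCF} only if all the tests pass. By the ``$n-1$ out of $n$'' device estimation result advertised in the abstract (which the authors prove via linear programming), the empirical test statistics produce a rigorous upper bound $\delta$ on how far the untested copy can deviate from ideal GHZ behaviour, conditioned on the test being accepted with non-negligible probability. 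A cheater that tries to substitute in a strongly non-ideal device is therefore caught with high probability, and any cheater that survives the test is effectively using a $\delta$-ideal device inside \Algref{SCF}.

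Second, I would invoke the continuity conjectures \Conjref{Qcont,Pcont} to turn this $\delta$-closeness of the device into a vanishing perturbation of the cheating probabilities about their ideal values $\cos^2(\pi/8)$ and $3/4$. Choosing $n$ large and the acceptance threshold appropriately tight, both modified cheating probabilities of the self-tested variant of \Algref{SCF} approach those of the ideal (trusted-device) version. This completes the first technique, yielding a device-independent protocol whose cheating probabilities are arbitrarily close to $\cos^2(\pi/8)$ and $3/4$ respectively.

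Since the self-tested protocol still has $\PA > \PB$, I would then apply an abort-phobic composition of several copies of it, swapping the roles of Alice and Bob between some of the copies and combining their outcomes through a branching structure in which aborts cannot be exploited to shift the output. The asymmetry between $\PA$ and $\PB$ is the slack that composition spends: tuning the number of copies and the branching weights should drive both cheating probabilities to a common value approximately equal to $0.79014$, so that $\eps \to 0.29014$ in the limit.

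The main obstacle, I expect, is the abort-phobic composition analysis: one has to verify that the composition really closes off every cheating avenue (a combined cheater could in principle correlate his behaviour across subprotocols, or exploit aborts to push the outcome towards his favoured value) and then optimise its parameters to reach the claimed numerical bias. The bridge from ideal to near-ideal devices, which is the other delicate point, is deliberately offloaded to \Conjref{Qcont,Pcont}; this is why the theorem is conditional on those conjectures rather than unconditional.
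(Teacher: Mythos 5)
There is a genuine gap, and it sits at the very first step. You apply the self-test directly to \Algref{SCF}, and you expect the resulting cheating probabilities to approach the ideal values $\cos^2(\pi/8)$ and $3/4$. But the paper's point of departure is that self-testing buys \emph{nothing} for \Algref{SCF}: its optimal cheating strategies are already implementable with honest GHZ boxes, so pinning the devices to GHZ leaves $\PA=\cos^2(\pi/8)$ and $\PB=3/4$ unchanged. Starting from that pair, composition (standard or abort-phobic) can only balance down to roughly $0.836$ (the SCAKPM bound of $0.33664$), not to $0.791$. The missing idea is the modification of \Algref{SCF} into the weak-coin-flipping variant \Algref{WCF}, where the \emph{loser} runs an extra test (Alice performs a GHZ test on Bob's boxes when $x\oplus g=1$). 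This added test does not change the one-shot values ($\PB$ is still $3/4$), but it changes the structure of Bob's optimal attack so that it now \emph{requires} tampering with the boxes --- and only then does self-testing bite, dropping $\PB$ to $\approx 0.667$ in \Algref{AliceSelfTests}. That asymmetric pair $(\cos^2(\pi/8),\,0.667)$, together with the nontrivial abort probabilities encoded in the cheat-vector SDPs, is the slack the composition actually spends to reach $\approx 0.791$.

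Two further points. First, your self-test is performed ``jointly'' by Alice and Bob on $n-1$ copies; in this adversarial setting that is not meaningful (a cheater can lie about outcomes), and the paper proves that no procedure can certify the shared state for both parties simultaneously (that would enable near-perfect strong coin flipping, contradicting Kitaev's bound). Instead one designated party collects all boxes but one triple and tests alone, which is why the paper needs \emph{two} protocols, $\mathcal{P}$ (Alice tests) and $\mathcal{Q}$ (Bob tests). Second, the final number is not obtained by ``tuning branching weights'' over copies of a single protocol: pure iteration of $\mathcal{P}$ gives only $0.81459$ and of $\mathcal{Q}$ only $0.822655$; the claimed $\approx 0.791$ requires the specific mixed recursion $\mathcal{Z}^{1}=C^{L\perp}(\mathcal{Q},\mathcal{P})$, $\mathcal{Z}^{k+1}=C^{L\perp}(\mathcal{Q},\mathcal{Z}^{k})$, i.e.\ $\mathcal{P}$ at the bottom and $\mathcal{Q}$ on every higher layer. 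Your high-level intuition about the estimation lemma, the role of the continuity conjectures, and the danger of correlated cheating across subprotocols is sound, but as written the construction cannot reach the stated bias.
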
 

	Before discussing the proof, we note that \Algref{SCF} was, in fact, a strong coin flipping protocol and we begin by turning it into a weak coin flipping protocol---\Algref{WCF}---in the most natural way. %
	Again, we defer the explicit description of the protocol and informally describe the basic idea: since weak coin flipping has the notion of a ``winner'' (if $c=0$ Alice wins and if $c=1$ Bob wins) we have the party who does not win, conduct an additional test. 
	
	The proof of our theorem relies on two key techniques. Our first technique is to add a pre-processing step to \Algref{WCF} which \emph{self-tests} the boxes shared by Alice and Bob at the start of the protocol. 
Our second technique is to compose and analyse the resulting protocols in a new way,\footnote{The composition in \cite{Silman2011} may also be seen as "abort-phobic" but their analysis doesn't rely on the "abort" probability; their bound essentially neglects the abort event.} which we call \emph{abort-phobic} composition.

	\subsection{First technique: Self-testing} 

	In the original Protocol~\ref{alg:SCF} and its WCF variant, Protocol~\ref{alg:WCF}, a cheating party may control what measurement is performed in the boxes of the other party and how the state of the boxes is correlated to its own quantum memory. This is more general than \textit{device-dependent} protocols, where for instance, the measurements are known to the honest player. 
	However, we employ the concept of self-testing to stop Bob (or Alice) from applying such a strategy. Intuitively, self-testing is a powerful property which allows one to, just from certain input-output behaviours of given devices (satisfying minimal assumptions), conclude uniquely which quantum states and measurements constitute the devices (up to relabelling). The GHZ state which was used in \Algref{SCF,WCF} can be self-tested. Clearly, this property has the potential to improve their security.\footnote{In \cite{Silman2011}, it was noted that self-testing doesn't help improve the security of \Algref{SCF}. Alternatively stated, \Algref{SCF} has the curious property that its device dependent variant has the same security as it (the device dependent variant).}
	
	We define two variants of \Algref{WCF}: \Algref{AliceSelfTests}, where Alice self-tests Bob before executing \Algref{WCF}, and \Algref{BobSelfTests}, where Bob self-tests Alice instead. Skipping the details, the basic construction is almost trivial. Alice and Bob start with $n$ triples of boxes (constituting $n$ untrusted quantum devices). When Alice self-tests, for instance, Alice asks Bob to send all but one randomly selected triple and tests if the GHZ test passes for these. If so, the remaining triple is used for the actual protocol. 
	A large enough $n$ forces a dishonest Bob to not tamper with the boxes too much, as suggested above.
	Indeed, \Algref{AliceSelfTests} (i.e. when Alice self-tests) already allows us to reduce the cheating probabilities.\footnote{\Algref{BobSelfTests} (i.e. when Bob self-tests) does not result in a lower bias. However, as we show, when protocols are composed using our second technique, \Algref{BobSelfTests} helps lower the bias.}

	\begin{prop} [Informal. See~\Propref{AliceSelfTests} for a formal statement] 
	For \Algref{AliceSelfTests}, i.e. where Alice self-tests Bob, the cheating probabilities, in the limit of $n\to \infty$, are 
	\begin{equation} 
	\PA = \cos^2 (\pi/8) \approx 0.85355 \quad \text{ and } \quad \PB \approx 0.6667,
	\end{equation} 
	assuming a continuity conjecture.
	\end{prop}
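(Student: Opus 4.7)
The plan is to bound $\PA$ and $\PB$ separately, exploiting the asymmetry of \Algref{AliceSelfTests} where only Alice self-tests. For $\PA$, the key observation is that a cheating Alice has full control over the self-testing sub-protocol: she chooses which triple is retained, she controls all of Bob's GHZ boxes throughout the test, and she sees every classical message. She can therefore run the honest GHZ strategy on an arbitrarily designated ``test'' round (ensuring all $n-1$ tests pass with probability one) and then deploy the optimal cheating strategy for the underlying \Algref{WCF} on the retained triple. This yields $\PA(\text{\Algref{AliceSelfTests}}) \le \PA(\text{\Algref{WCF}})$, and the matching lower bound from explicit cheating Alice strategies inherited from the SCF analysis of \cite{Silman2011} produces the stated value $\PA = \cos^2(\pi/8)$.

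For $\PB$, the analysis is more involved. A cheating Bob prepares a joint (potentially highly correlated) state across all $n$ triples together with a side system, and commits to measurement strategies on each. Because Alice picks the retained index uniformly at random, a symmetrisation argument allows us to assume without loss of generality that Bob's strategy is invariant under permutations of the $n$ triples. I would then invoke the ``test $n-1$ out of $n$'' linear-programming lemma advertised in the introduction: conditional on all $n-1$ GHZ tests passing, the input--output statistics of the retained triple are forced to lie within $\delta(n)$ of the ideal GHZ behaviour, with $\delta(n)\to 0$ as $n\to\infty$. Combined with the robustness of GHZ self-testing, this implies that (up to local isometries) the quantum state and measurements on the retained triple are approximately the ideal GHZ configuration.

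Once the retained triple is approximately ideal, a cheating Bob is effectively reduced to cheating strategies available in the \emph{device-dependent} variant of \Algref{WCF}. The continuity conjecture \Conjref{Qcont} is exactly the bridge required: it ensures that approximate closeness of the device to the ideal GHZ configuration translates into approximate closeness of the optimal cheating probability. Taking $n\to\infty$ and applying this conjecture, the upper bound on $\PB$ collapses to the device-dependent cheating probability of \Algref{WCF}, which a direct calculation shows equals $2/3$; the additional test performed by the losing party Alice in the WCF variant is precisely what reduces the $3/4$ of \Algref{SCF} down to $2/3$. The lower bound follows by exhibiting an explicit attack.

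The main obstacle is the $\PB$ direction. Standard robust self-testing applies to a single i.i.d.\ copy, but Bob's $n$-triple preparation may be arbitrarily entangled across triples, so the symmetrisation and LP steps must be handled carefully to ensure the residual deviation on the retained triple vanishes uniformly as $n\to\infty$. Moreover, the continuity conjecture is essential and cannot obviously be replaced by a quantitative robustness statement: even vanishing statistical deviations on the retained triple could, in principle, be amplified through the subsequent multi-round classical interaction of \Algref{WCF} absent some continuity hypothesis of the kind assumed.
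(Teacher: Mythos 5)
Your $\PA$ argument is essentially the paper's: cheating Alice performs the self-test herself and may have prepared all the boxes, so the test constrains her not at all and $\PA(\mathcal{P})=\PA(\mathcal{W})=\cos^2(\pi/8)$. The skeleton of your $\PB$ argument (estimate the retained triple from the $n-1$ tested ones, apply robust GHZ rigidity, invoke a continuity conjecture) is also the paper's. However, two of your key claims for $\PB$ are wrong. First, the reduction is \emph{not} to ``the device-dependent variant of \Algref{WCF}'', and $0.667$ is not obtained by ``a direct calculation'' on such a variant. The self-test only characterises the single box that \emph{Alice} retains; Bob keeps his two boxes and, being the adversary, may measure them arbitrarily and hold arbitrary purifying systems. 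What the characterisation buys is that Bob's optimal cheating probability becomes the value of an explicit SDP (\Eqref{SDP_AliceSelfTests}) in which Alice's side is fixed to the ideal GHZ marginal while Bob's responses are arbitrary extensions; the number $\approx 0.667$ is the numerically computed optimum of that SDP. Relatedly, your causal story---that Alice's additional test is ``precisely what reduces the $3/4$ of \Algref{SCF} down to $2/3$''---is explicitly contradicted by the paper: $\PB(\mathcal{W})=3/4$ even with the extra test, and the drop only occurs once self-testing prevents Bob from tampering with Alice's box. (Also, the relevant conjecture for $\mathcal{P}$ is \Conjref{Pcont}; \Conjref{Qcont} concerns \Algref{BobSelfTests}.)

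Second, your estimation step asserts a guarantee \emph{conditional} on all $n-1$ tests passing, namely that the retained triple's behaviour is within $\delta(n)$ of ideal. No such conditional statement holds: Bob can prepare boxes that almost always fail but, on the rare event of passing, leave a useless retained triple, so $\Pr[\bar T\mid\Omega]$ can be close to $1$. What \Propref{security} actually bounds is the \emph{joint} probability $\Pr[\Omega\cap\bar T]\le 1/(n\epsilon)$, and the security proof then uses $\PB(\mathcal{P})\le \PB(\mathcal{P}\mid\Omega T)+\Pr(\Omega\bar T)$ with a suitable joint scaling of $\epsilon$ and $n$. (The paper's LP argument works directly on the joint distribution; your symmetrisation step is not needed.) As written, your conditional claim is false and your identification of the limiting value $\approx 0.667$ is unjustified, even granting the continuity conjecture.
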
  
	For comparison, recall that for \Algref{SCF} (it turns out, also for \Algref{WCF}), $\PA=\cos^2(\pi/8)$ and $\PB=3/4$. We prove this lemma in two stages. In the \emph{first} stage (see \Secref{securityAsymptotic}), we assume perfect self-testing: the self-testing step results in exactly specifying (up to a relabelling) the state and measurements governing Alice's boxes. This may be seen as taking $n \to \infty$ in the self-testing step. It is known that for device-dependent protocols, where Alice and Bob trust their devices, the cheating probabilities can be cast as values of semi-definite programs (SDPs) ~\cite{Kitaev03,Mochon07}. Perfect self-testing allows us to, therefore, express Bob's cheating probabilities as an SDP. Its numerical evaluation yields the quoted value. Analysis for Alice's cheating probability is unchanged from \Algref{WCF}. 
	{In the \emph{second} stage (see \Secref{SecurityFiniteN}), we analyse the protocol with $n$ finite and prove that it converges to the case above, under a precisely stated continuity conjecture. The analysis consists of the following two key conceptual steps.}

	{
	\paragraph{Self-testing in a cryptographic setting.} Since Alice tests $n-1$ devices, at best she can conclude (with some confidence) that the last device wins the GHZ game with probability $1-\eps(n)$ where $\eps(n)$ decreases with $n$. We therefore need a \emph{robust} self-testing result that allows one to conclude, in particular, the following: if the success probability of a device in a GHZ test is close to unity, then the states and measurements constituting the device are close to GHZ states and measurements (up to a relabelling), in say trace distance. Fortunately, such \emph{robust} self-testing results are known for the GHZ game~\cite{MillerShi} and so it only remains to show the first statement, i.e. estimating the success probability of the remaining device, based on testing $n-1$ devices. While this looks straightforward, formalising and proving this statement turns out to be a bit subtle (see \Subsecref{EstimateGHZ}). For instance, related statements are known (e.g.~\cite{Vazirani_2014}) when all devices are measured but they do not apply to our setting where one device is left for later use in the protocol. Our statement holds quite generally for any game with perfect completeness,\footnote{i.e. there is a quantum strategy for winning the game with probability 1} and is proved using linear programming. It may, therefore, be of independent interest. In particular, our statement holds for both \Algref{AliceSelfTests,BobSelfTests}.}
	
	{\paragraph{Continuity conjecture.} Consider \Algref{AliceSelfTests} where Alice self-tests. Using the result described above, one can conclude that the state and measurements in the remaining pair of boxes held by Alice, are $\eps$-close (in trace distance) to the GHZ state and measurements (up to relabelling). If they were exactly the same, the cheating probabilities could be expressed as the value, say $\eta_0$, of an SDP. When they are not, one can still write an optimisation problem whose solution, say $\eta(\eps)$, bounds the cheating probabilities, but it is unclear if this optimisation problem is an SDP. The main issues are that the adversary can use states of arbitrary dimensions (that could scale with $\eps$) and that the optimisation is over both states and measurements. We conjecture that $\lim_{\eps\to 0} \eta(\eps) = \eta_0$, i.e. the analysis with $\eps$-close boxes converges to the analysis done assuming Alice holds GHZ boxes. Since $\eta(0)=\eta_0$, if the conjecture fails, it must mean there is a discontinuity in $\eta(\eps)$ near $\eps=0$ and this would be very surprising. The precise continuity statements for \Algref{AliceSelfTests,BobSelfTests} are stated as \Conjref{Pcont,Qcont}, respectively. }
	
	\vspace{1em}

	{While we leave the proofs of these conjectures to future work, we remark that we do not see any obvious obstacles for proving \Conjref{Pcont}. \Conjref{Qcont} (when Bob self-tests), on the other hand, seems more involved as the box held by Bob is not measured right after the self-test step, but only later---after some interaction has taken place between the two parties. This makes it more difficult to relax the optimisation problem (corresponding to the cheating probabilities) to obtain an SDP in dimensions independent of $\eps$. }

	\subsection{Second technique: abort-phobic composition}\label{subsec:IntroSecondTechnique}
    It can happen, that for a given WCF protocol, $\PB \neq \PA$, in which case we say the protocol is \emph{polarised}.  
    As we saw earlier, it is known (e.g. \cite{Silman2011}) that composing a polarised protocol with itself (or other protocols) can effectively reduce the bias. Our second improvement is a modified way of composing protocols when there is a {non-zero} %
	probability that the honest player catches the cheating player. 
    Let us start by recalling the standard way of composing protocols.

	{\paragraph{Standard composition.} 
	For a protocol with cheating probabilities $\PB$ and $\PA$, we say that it has polarity towards Alice when it satisfies $\PA > \PB$. 
	Similarly, we say that it has polarity towards Bob when $\PB > \PA$. 
	Given a polarised protocol $\mathcal{R}$, we may switch the roles of Alice and Bob since the definition of coin flipping is symmetric. 
	To make the polarity explicit, we define $\mathcal{R}_A$ to be the version of the protocol with $\PA > \PB$ and $\mathcal{R}_B$ to be the version with $\PB > \PA$.}
	With this in mind, we can now define a simple composition. 
	
	\begin{lyxalgorithm}[Winner-gets-polarity composition] \label{alg:simple}  
	Alice and Bob agree on a protocol $\mathcal{R}$. 
	\begin{enumerate} 
	\item Alice and Bob execute protocol $\mathcal{R}$. 
	\item If Alice wins, she polarises the second protocol towards herself, i.e., they now use the protocol $\mathcal{R}_A$ to determine the final outcome. %
	\item If Bob wins, he polarises the second protocol towards himself, i.e., they now use the protocol $\mathcal{R}_B$ to determine the final outcome. %
	\end{enumerate} 
	\end{lyxalgorithm}  

	The standard composition above is a sensible way to balance the cheating probabilities of a protocol. 
	For instance, if $\mathcal{R}$ has cheating probabilities $\PA$ and $\PB$ with $\PA > \PB$, then the composition gets to decide ``who gets to be Alice'' in the second run.  
	We can easily compute Alice's cheating probability in the composition as 
	\begin{equation} \label{first}
	(\PA)^2 + (1-\PA) \PB < \PA 
	\end{equation}  
	and Bob's as 
	\begin{equation} \label{second}
	\PB \PA + (1- \PB) \PB < \PA. 
	\end{equation} 
	This does indeed reduce the bias since the maximum cheating probability is now smaller.  

	\paragraph{Abort-phobic composition.} 
	The ``traditional'' way of considering WCF protocols is to view them as only having two outcomes ``Alice wins'' (when $c = 0$) or ``Bob wins'' ($c = 1$). 
	This is because Alice can declare herself the winner if she catches Bob cheating. 
	Similarly, Bob can declare himself the winner if he catches Alice cheating.\footnote{In doing so, we implicitly assume that the protocol has perfect correctness---when both players are honest, the probability of abort is zero.} 
	This is completely fine when we consider ``one-shot'' versions of these protocols, but we lose something when we compose them. 
	For instance, in the simple composition used in ~\Algref{simple}, Bob should not really accept to continue onto the second protocol if he catches Alice cheating in the first. 
	That is, if he knows Alice cheated, he can declare himself the winner of the entire protocol.  
	In other words, the cheating probabilities~(\ref{first}) and (\ref{second}) may get reduced even further.  
	For purposes of this discussion, suppose Bob adopts a cheating strategy which has a probability $v_B$ of him winning ($c = 1$), a probability $v_A$ of him losing ($c = 0$), and a probability $v_{\perp}$ of Alice catching him  cheating. 
	Then his cheating probability in the (abort-phobic) version of the simple composition is now 
	\begin{equation} 
	v_B \cdot \PA \, + \, v_A \cdot \PB \, + \, v_{\perp} \cdot 0. 
	\end{equation} 
	This quantity may be a strict improvement if $v_{\perp} > 0$ when $v_B = \PB$.  

	The concept of abort-phobic composition is simple. 
	Alice and Bob keep using WCF protocols and the winner (at that round) gets to choose the polarity of the subsequent protocol. 
	However, if either party \emph{ever aborts}, then it is game over and the cheating player loses \emph{the entire composite protocol}. 

	One may think it is tricky to analyse abort-phobic compositions, but we may do this one step at time. 
	To this end, we introduce the concept of \emph{cheat vectors}. 

	\begin{defn}[$\mathbb{C}_A,\mathbb{C}_B$; Alice and Bob's cheat vectors]  
	\label{def:CheatVectors}
	Given a protocol $\mathcal{R}$, we say that $(v_A, v_B, v_{\perp})$ is a cheat vector for (dishonest) Bob if there exists a cheating strategy where,
	\begin{center} 
	\begin{tabularx}{\textwidth}{lX}
		$v_B$ \, is the probability with which Alice accepts the outcome $c = 1$, \\ 
		$v_A$ \, is the probability with which Alice accepts the outcome $c = 0$, \\ 
		$v_{\perp}$ \, is the probability with which Alice aborts. \\ 
	\end{tabularx} 
	\end{center}  
	We denote the set of cheat vectors for (dishonest) Bob by $\mathbb{C}_B({\mathcal{R})}$. Cheat vectors for (dishonest) Alice and $\mathbb{C}_A({\mathcal{R})}$ are analogously defined keeping the notation  $v_A$ for her winning, $v_B$ for her losing, and $v_{\perp}$ for Bob aborting. 
	\end{defn} 
	
	In this work, we show how to capture cheat vectors as the feasible region of a semi-definite program, from which we can optimize 
	\begin{equation} 
	v_B \cdot \PA \, + \, v_A \cdot \PB \, + \, v_{\perp} \cdot 0. 
	\end{equation}
	For this to work, we assume we have $\PA$ and $\PB$ for the protocol that comes in the second round. 
	A simplifying observation is that once we solve for the optimal cheating probabilities in the abort-phobic composition in this way, we can then fix those probabilities and compose again.	In other words, we are recursively composing the abort-phobic composition, from the \emph{bottom up}. 

	By using abort-phobic compositions with \Algref{AliceSelfTests} (where Alice self-tests) one obtains protocols which converge onto a bias of $\varepsilon \approx 0.31486$ proving the first part of the main result. For the second part, we place \Algref{AliceSelfTests} at the bottom, and \Algref{BobSelfTests} (where Bob self-tests) on higher layers, to obtain protocols whose bias approaches $\varepsilon \approx 0.29104$. %
	These results are also contingent on the assumption that \Conjref{Pcont,Qcont} hold. 

	\subsection{Applications} 

	The concept of polarity extends beyond finding WCF protocols and, as such, the ``winner-gets-polarity'' concept allows for WCF to be used in other compositions. 
	Indeed, we can use it to balance the cheating probabilities in  \emph{any} polarised protocol for any symmetric two-party cryptographic task for which such notions can be properly defined. 
	
	For instance, many \emph{strong} coin flipping protocols can be thought of as polarised. 
	For an example, ~\Algref{SCF} is indeed a polarised strong coin flipping protocol. 
	Thus, by balancing the cheating probabilities of that protocol using our DI WCF protocol, 
	we get the following corollary. 

	\begin{cor} 
	Suppose \Conjref{Pcont,Qcont} hold. Then, there exist DI strong coin flipping protocols where no party can cheat with probability greater than $0.33192$. 
	\end{cor}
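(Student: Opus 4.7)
I plan to balance the asymmetric cheating probabilities of Algorithm~\ref{alg:SCF} (the strong coin flipping protocol of Silman et al., with $\PA = \cos^2(\pi/8) \approx 0.85355$ and $\PB = 3/4$) by using our DI WCF protocol as a pre-processing step. The composed strong coin flipping protocol I would consider is as follows: first, Alice and Bob execute the DI WCF protocol guaranteed by the main theorem (with bias $\varepsilon_{\mathrm{WCF}} \le 0.29104$); the winner of WCF is then designated to play the ``Alice'' role in a subsequent run of Algorithm~\ref{alg:SCF}, with the loser playing ``Bob''; the outcome of that SCF run is the output of the composed protocol. Correctness for honest parties is immediate since both subprotocols each produce a uniform random bit when played honestly.

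For the soundness analysis, by the symmetry between the two outcomes and between the two parties, it suffices to consider a cheating Alice attempting to force a fixed target outcome $c \in \{0,1\}$. Her optimal strategy decomposes into two sequential stages. If she wins WCF---which, by the bias bound on our DI WCF protocol, she can do with probability at most $p^{\mathrm{WCF}} = 1/2 + \varepsilon_{\mathrm{WCF}} \approx 0.79104$---she then plays the ``Alice'' role in Algorithm~\ref{alg:SCF} and forces the target outcome with probability at most $\PA = \cos^2(\pi/8)$. Otherwise she plays the ``Bob'' role and forces the target outcome with probability at most $\PB = 3/4$. Since $\PA > \PB$, her best choice is to maximise her winning probability in WCF, yielding the bound
\begin{equation*}
p^{\mathrm{composed}} \;\le\; p^{\mathrm{WCF}} \cdot \PA \,+\, (1-p^{\mathrm{WCF}}) \cdot \PB \;\approx\; 0.79104 \cdot 0.85355 \,+\, 0.20896 \cdot 0.75 \;\approx\; 0.83192,
\end{equation*}
corresponding to a bias $\varepsilon \le 0.33192$. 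The same bound applies to cheating Bob by the symmetry of the composition.

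The step I expect to be the main obstacle is rigorously showing that the two cheating stages cannot be combined coherently to beat this product-like bound. Concretely, one must verify that shared quantum memory across the WCF and SCF subprotocols does not allow the adversary to exceed either individual DI cheating bound, which typically requires a sequential composability argument together with careful book-keeping of aborts: an abort in WCF should propagate to an abort of the composed protocol, so that no abort event contributes to forcing a particular outcome and therefore none benefits the cheater. Granted that both subprotocols are DI under \Conjref{Pcont,Qcont}, the composed protocol inherits DI security and the stated cheating bound.
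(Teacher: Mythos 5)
Your proposal is correct and is exactly the argument the paper intends: use the composed DI WCF protocol (cheating probability $0.791044$) to decide who plays the favourable ``Alice'' role in \Algref{SCF}, giving $0.791044\cdot\cos^2(\pi/8)+(1-0.791044)\cdot\tfrac34\approx0.83192$, i.e.\ a bias of $0.33192$ (the corollary's figure is indeed the bias, as your reading and the comparison with Silman et al.'s $0.336637$ confirm). The sequential-composition caveat you flag is handled the same way the paper handles its own compositions, by applying the single-round DI bounds conditionally on the WCF outcome and letting aborts propagate.
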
 
To contrast, for \cite{Silman2011}, the bound on cheating probabilities was $0.336637$. There are likely more examples of protocols which can be balanced in a DI way using this idea.  
		  
\section{New protocols using self-testing | First Technique}
\label{sec:FirstTechSelfTest}
We start by recalling the DI strong coin flipping protocol introduced in \cite{Silman2011}, \Algref{SCF}, and introduce its weak coin flipping variant \Algref{WCF}. We then describe the new \Algref{AliceSelfTests,BobSelfTests}, where Alice and Bob respectively perform the self-testing step. We also give more formal security guarantees associated with these. Their proofs constitute \Secref{securityAsymptotic,SecurityFiniteN}. 

\paragraph{Notation} %
We often use single calligraphic symbols $\mathcal S, \mathcal W, \mathcal P$ and $\mathcal Q$ to succinctly refer to the aforementioned protocols. We use, for instance, $p^*_A(\cal{W})$ (resp. $p^*_B(\cal{W})$) to denote the maximum probability with which a dishonest Alice (resp. Bob) can force an honest Bob (resp. Alice) to output ``Alice'' (resp. ``Bob'') in an execution of $\cal{W}$. When we say, for instance, consider a tripartite device, viz. a triple of boxes $\Box^A,\Box^B,\Box^C$, we mean that there is a tripartite quantum state and local measurements associated with these boxes. The input to the box selects the measurement setting and the output is the measurement outcome as governed by quantum theory (see \Defref{box}). When we speak of Alice and Bob exchanging boxes, we understand that the these states and the description of the measurement settings are sent over a (possibly insecure) quantum communication channel (see \Defref{BoxProtocol,MEprotocol} in \Secref{BoxParadigm}). 

We recall the GHZ test before starting our main discussion as this is at the heart of these protocols.

\begin{defn}
\label{def:GHZ-box}Suppose we are given a tripartite device, viz. a triple of boxes, $\Box^{A},\Box^{B}$
and $\Box^{C}$, which accept binary inputs $a,b,c\in\{0,1\}$ and
produces binary output $x,y,z\in\{0,1\}$ respectively. The boxes
pass the GHZ test if $a\oplus b\oplus c=xyz\oplus1$, given the inputs
satisfy $x\oplus y\oplus z=1$.
\end{defn}
 
It is known that no classical triple of boxes can pass the GHZ test with certainty but quantum boxes can.

\begin{claim}
\label{claim:Quantum-boxes-pass}Quantum boxes pass the GHZ test with
certainty (even if they cannot communicate), for the state $\left|\psi\right\rangle _{ABC}=\frac{\left|000\right\rangle _{ABC}+\left|111\right\rangle _{ABC}}{\sqrt{2}}$,
and measurement\footnote{we added the identity so that the eigenvalues associated become $0,1$
instead of $-1,1$.} $\frac{\sigma_{x}+\mathbb{I}}{2}$ for input $0$ and $\frac{\sigma_{y}+\mathbb{I}}{2}$
for input $1$ (in the notation introduced earlier, $M_{0|0}^{A}=\left|+\right\rangle \left\langle +\right|,M_{1|0}^{A}=\left|-\right\rangle \left\langle -\right|$
and so on, where $\left|\pm\right\rangle =\frac{\left|0\right\rangle \pm\left|1\right\rangle }{\sqrt{2}}$).
\end{claim}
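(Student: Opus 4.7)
The plan is to verify the claim by direct computation. Since the strategy is fully specified and only finitely many input tuples satisfy the promise of Definition~\ref{def:GHZ-box}, the argument reduces to a small case check: for each valid $(a,b,c)$, it suffices to compute the action of the prescribed joint measurement on $\ket{\psi}$ and read off the deterministic parity of the reported outputs.

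First, I would pin down the translation between the $\pm 1$ eigenvalues of Pauli operators and the $\{0,1\}$-valued outputs of the boxes. Under the $(\sigma+\mathbb{I})/2$ convention used in the claim, outcome $0$ corresponds to eigenvalue $+1$ and outcome $1$ to eigenvalue $-1$; hence the parity $x\oplus y\oplus z$ is completely determined by the sign of the joint eigenvalue of $P_a\otimes P_b\otimes P_c$ on $\ket{\psi}$, where $P_0=\sigma_x$ and $P_1=\sigma_y$. I would then compute this joint eigenvalue case-by-case, using the elementary identities $\sigma_x\ket{0}=\ket{1}$, $\sigma_x\ket{1}=\ket{0}$, $\sigma_y\ket{0}=i\ket{1}$, and $\sigma_y\ket{1}=-i\ket{0}$. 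In each admissible case, $\ket{\psi}$ turns out to be a $\pm 1$ eigenvector of $P_a\otimes P_b\otimes P_c$: the all-$\sigma_x$ case yields eigenvalue $+1$, while each case with two $\sigma_y$'s and one $\sigma_x$ yields $-1$, since the two factors of $\pm i$ coming from the $\sigma_y$'s combine coherently to $-1$ on both the $\ket{000}$ and $\ket{111}$ branches of the GHZ state.

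The final step is to check that the parity of outputs so obtained matches the passing condition of Definition~\ref{def:GHZ-box} for every valid input. The only real ``obstacle'' is notational bookkeeping: keeping sign conventions straight when mapping Pauli eigenvalues to output bits, and reconciling the resulting XOR with the precise form of the passing condition as stated in the definition. Once this translation is fixed, the verification is mechanical; the non-communication requirement is automatic because the three measurements act on disjoint tensor factors, so no classical or quantum message needs to be exchanged between the boxes to generate outputs with the required joint distribution.
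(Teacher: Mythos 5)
Your overall approach is the same as the paper's: the paper's one-line proof also reduces the claim to checking that $\ket{\psi}$ is a $\pm 1$ eigenvector of the relevant triple Pauli products, and your individual computations are correct ($\sigma_x\otimes\sigma_x\otimes\sigma_x\ket{\psi}=\ket{\psi}$, and eigenvalue $-1$ for each product with exactly two $\sigma_y$ factors, the two factors of $\pm i$ combining to $-1$ on both branches).

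However, the step you defer as ``notational bookkeeping'' is precisely where the argument does not close, and it is not purely cosmetic. The admissible cases you enumerate --- all-$\sigma_x$, or two $\sigma_y$'s and one $\sigma_x$ --- are the input triples with $x\oplus y\oplus z=0$ under the map $0\mapsto\sigma_x$, $1\mapsto\sigma_y$. Definition~\ref{def:GHZ-box} instead imposes the promise $x\oplus y\oplus z=1$, whose input triples yield the observables $\sigma_y\otimes\sigma_x\otimes\sigma_x$ (and permutations) and $\sigma_y\otimes\sigma_y\otimes\sigma_y$; this set is disjoint from the set you check. The GHZ state is \emph{not} an eigenvector of any observable in that second set (e.g.\ $\sigma_y^{\otimes 3}\ket{\psi}=i(\ket{000}-\ket{111})/\sqrt{2}$), so no choice of outcome-labelling convention rescues the literal reading: one must either take the promise to be parity $0$ or swap the roles of $\sigma_x$ and $\sigma_y$. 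In fairness, this inconsistency originates in the paper itself (Definition~\ref{def:GHZ-box} also interchanges the labels of inputs and outputs, and the paper's own proof sketch asserts the false identity $\sigma_y^{\otimes 3}\ket{\psi}=-\ket{\psi}$), and your eigenvalue analysis does verify the intended, corrected statement --- the standard GHZ paradox. But a complete proof must make this reconciliation explicit rather than presuming it is mechanical, since as written the cases you verify are not the cases the definition asks about.
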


The proof is easier to see in the case where the outcomes are $\pm1$;
it follows from the observations that $\sigma_{y}\otimes\sigma_{y}\otimes\sigma_{y}\left|\psi\right\rangle =-\left|\psi\right\rangle $,
$\sigma_{x}\otimes\sigma_{x}\otimes\sigma_{x}\left|\psi\right\rangle =\left|\psi\right\rangle $
and the anti-commutation of $\sigma_{x}$ and $\sigma_{y}$ matrices,
i.e. $\sigma_{x}\sigma_{y}+\sigma_{y}\sigma_{x}=0$.

In fact a stronger property holds. If a triple of boxes passes the GHZ test with certainty, it can be shown that up to a local isometry, the state and measurements are as in \Claimref{Quantum-boxes-pass} above. %

\begin{lem}\label{lem:rigidityGHZ}
	Let $a,b,c,x,y,z\in\{0,1\}$. Consider a triple of quantum boxes, specified
	by projectors\footnote{This is without loss of generality; given POVMs and a state, one can always construct projectors and a state on a larger Hilbert space which preserves the statistics, using Naimark's theorem.} $\{M_{a|x}^{A},M_{b|y}^{B},M_{c|z}^{C}\}$ acting on
	finite dimensional Hilbert spaces $\mathcal{H}^{A},\mathcal{H}^{B}$
	and $\mathcal{H}^{C}$, and $\left|\psi\right\rangle \in\mathcal{H}^{A}\otimes\mathcal{H}^{B}\otimes\mathcal{H}^{C}=:\mathcal{H}^{ABC}$.
	If the triple pass the GHZ test with probability $1-\epsilon$ (for
	$0\le \epsilon < 1$), then there exists a local isometry, 
	\[
	\Phi=\Phi^{A}\otimes\Phi^{B}\otimes\Phi^{C}:\mathcal{H}^{ABC}\to\mathcal{H}^{ABC}\otimes\mathbb{C}^{2\times3}
	\]
	and a decreasing function of $\epsilon$, $f(\epsilon)$ such that
	\begin{align}
		\label{eq:continuity_GHZ}
	\left\Vert \Phi\left(\left|\psi\right\rangle \right)-\left|\chi\right\rangle \otimes\left|{\rm junk}\right\rangle \right\Vert  & \le f(\epsilon),\\
	\left\Vert \Phi\left(M_{d|t}^{D}\left|\psi\right\rangle \right)-\Pi_{d|t}^{D}\left|{\rm GHZ}\right\rangle \otimes\left|{\rm junk}\right\rangle \right\Vert  & \le f(\epsilon)\quad\forall D\in\{A,B,C\},\text{ and }d,t\in\{0,1\}
	\end{align}
	where $\left|{\rm GHZ}\right\rangle =\frac{\left|000\right\rangle +\left|111\right\rangle }{\sqrt{2}}\in\mathbb{C}^{2\times3}$,
	$\left|{\rm junk}\right\rangle \in\mathcal{H}^{ABC}$ is some arbitrary
	state and $\{\Pi_{a|x}^{A},\Pi_{b|y}^{B},\Pi_{c|z}^{C}\}$ are projectors
	corresponding to $\sigma_{x}$ on the first, second and third qubit
	of $\left|{\rm GHZ}\right\rangle $ respectively, for $x=0$ and corresponding
	to $\sigma_{y}$ for $x=1$, as in \Claimref{Quantum-boxes-pass}.
\end{lem}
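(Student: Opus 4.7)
The plan is to follow the standard Swap-isometry approach to self-testing, specialised to the GHZ game. At each party $D \in \{A,B,C\}$ I define the binary observables $X_D := M_{0|0}^D - M_{1|0}^D$ and $Y_D := M_{0|1}^D - M_{1|1}^D$; these are Hermitian with $\pm 1$ spectrum (since the $M_{\cdot|t}^D$ are projective) and serve as candidate ``local Pauli'' operators that the Swap gadget will extract.

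For the ideal case $\epsilon = 0$, the four valid input triples in \Defref{GHZ-box} (those with $x \oplus y \oplus z = 1$) together with the winning condition $a \oplus b \oplus c = xyz \oplus 1$ yield the operator identities $X_A X_B X_C |\psi\rangle = |\psi\rangle$ and $X_A Y_B Y_C |\psi\rangle = Y_A X_B Y_C |\psi\rangle = Y_A Y_B X_C |\psi\rangle = -|\psi\rangle$. Multiplying any two of these identities produces the key anti-commutation $\{X_D, Y_D\}|\psi\rangle = 0$ at each party, which together with $X_D^2 = Y_D^2 = \mathbb{I}$ supplies exactly the algebraic input needed by the standard Swap construction. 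The local isometry $\Phi^D$ adjoins an ancilla qubit and executes a Hadamard, controlled-$X_D$, Hadamard, controlled-$Y_D$ sequence; a direct calculation (using $M_{d|t}^D = \tfrac{1}{2}(\mathbb{I} \pm X_D)$ or $\tfrac{1}{2}(\mathbb{I} \pm Y_D)$) verifies that $\Phi = \Phi^A \otimes \Phi^B \otimes \Phi^C$ maps $|\psi\rangle$ exactly to $|{\rm GHZ}\rangle \otimes |{\rm junk}\rangle$ and $M_{d|t}^D |\psi\rangle$ to $\Pi_{d|t}^D |{\rm GHZ}\rangle \otimes |{\rm junk}\rangle$.

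For the robust case $\epsilon > 0$, each of the four identities above holds only approximately. The relation $\|X_A X_B X_C |\psi\rangle - |\psi\rangle\|^2 = 2 - 2 \langle \psi | X_A X_B X_C | \psi\rangle$ (and its three analogues) connects these errors to the expectation of the GHZ Bell operator, which is itself controlled by the winning probability $1-\epsilon$; one deduces that each of the four quantities, and likewise $\|\{X_D, Y_D\} |\psi\rangle\|$ at every party, is $O(\sqrt{\epsilon})$. Propagating these bounds through the constant-depth Swap gadget by repeated triangle inequality and Cauchy--Schwarz then delivers both inequalities of the lemma with some $f(\epsilon) = O(\sqrt{\epsilon})$ that vanishes at $\epsilon = 0$ and is monotone in $\epsilon$. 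The main obstacle is the error bookkeeping, particularly transferring approximate anti-commutation on the vector $|\psi\rangle$ to the sub-normalised vectors $M_{d|t}^D |\psi\rangle$ on which the measurement-side inequality must hold.

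Since robust GHZ self-testing of exactly the form required is established by Miller and Shi~\cite{MillerShi}, the cleanest route is to invoke that result directly and match parameters; the explicit $\Phi$ sketched above is what will be reused in the subsequent analysis of \Algref{AliceSelfTests,BobSelfTests}.
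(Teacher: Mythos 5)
The paper's ``proof'' of this lemma is a one-line citation to the robust GHZ self-testing results of \cite{MillerShi} and \cite{McKague}, which is exactly where your proposal lands in its final paragraph, so you take essentially the same route. Your preceding swap-isometry sketch (extracting $X_D,Y_D$, deriving the approximate stabiliser and anti-commutation relations, and propagating $O(\sqrt{\epsilon})$ errors through the gadget) is a faithful outline of how those cited proofs actually go.
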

\begin{proof}
	Proofs of robust self-testing for GHZ can be found in \cite{MillerShi} and \cite{McKague}.
\end{proof}
  
\subsection{Original protocols}
\label{subsec:SCForiginal}

\Algref{SCF} is defined as follows. 
\begin{varalgorithm}{S} 
	\caption{\quad A DI-SCF protocol with $\PA = \cos^2{\pi/8}$ and $\PB = 3/4$ (\cite{Silman2011})} 
	\label{alg:SCF}  

    Alice has one box and Bob has two boxes. 
    Each box takes one binary input and gives one binary output and are designed to play the optimal GHZ game strategy. 
    (Who creates and distributes the boxes is not important in the DI setting.) 
    \begin{enumerate}
        \item Alice chooses a uniformly random input to her box $x \in_R \{ 0, 1 \}$ and obtains the outcome $a$. 
        She chooses another uniformly random bit $r \in_R \{ 0, 1 \}$ and computes $s = a \oplus (x \cdot r)$. 
        She sends $s$ to Bob. 
    
        \item Bob chooses a uniformly random bit $g \in_R \{ 0, 1 \}$ and sends it to Alice. 
        (We may think of $g$ as Bob's ``guess'' for the value of $x$.) 
    
        \item Alice sends $x$ to Bob. 
        They both compute the output $c = x \oplus g$. 
        (This is the outcome of the protocol if no-one abort.)
        
        \item Bob tests Alice
        \begin{enumerate} 
            \item[\textup{Test 1}:] Alice sends $a$ to Bob. Bob sees if $s = a$ or $s = a \oplus x$.                If this is not the case, he aborts. 
            \item[\textup{Test 2}:] Bob chooses $y,z\in_{R}\{0,1\}$ uniformly at random such that $x\oplus y\oplus z=1$ and then performs a GHZ using $x,y,z$ as the inputs and $a,b,c$ as the output from the three boxes. He aborts if this test fails.
        \end{enumerate} 
        
    \item If Bob does not abort, they both accept the value of $c$ as the outcome of the protocol. 
    \end{enumerate} 
\end{varalgorithm} 

We now discuss the correctness and soundness of \Algref{SCF}. From \Claimref{Quantum-boxes-pass}, it is clear that when both players
follow the protocol using GHZ boxes (\Defref{GHZ-box}),
Bob never aborts and they win with equal probabilities. As for the security, \cite{Silman2011} proved the following.

\begin{lem}[Security of SCF]
 \cite{Silman2011} Let $\mathcal{S}$ denote the protocol corresponding
to \Algref{SCF}. Then, the success probability of cheating
Bob,\footnote{For SCF, $P^*_B$ is max\{Pr[Bob can force Alice to output 1], Pr[Bob can force Alice to output 0]\}; $P^*_A$ is analogously defined.} $p_{B}^{*}(\mathcal{S})\le\frac{3}{4}$ and that of cheating
Alice, $p_{A}^{*}(\mathcal{S})\le\cos^{2}(\pi/8)$. 

Further, both
bounds are saturated by a quantum strategy which uses a GHZ state
and the honest player measures along the $\sigma_{x}/\sigma_{y}$
basis corresponding to input $0/1$ into the box. Cheating Alice measures
along $\sigma_{\hat{n}}$ for $\hat{n}=\frac{1}{\sqrt{2}}(\hat{x}+\hat{y})$
while cheating Bob measures his first box along $\sigma_{x}$ and
second along $\sigma_{y}$. \label{lem:SCFstandard}
\end{lem}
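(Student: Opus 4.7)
The plan is to bound $\PB(\mathcal{S})$ and $\PA(\mathcal{S})$ separately, then verify the stated saturating strategies by direct calculation.

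\emph{Bound on} $\PB$. I first note that Bob's most general cheat reduces to predicting Alice's input bit $x$, since forcing either outcome $c \in \{0,1\}$ amounts to choosing $g = x$ or $g = x \oplus 1$. Before committing to $g$, Bob has (i) the classical message $s = a \oplus x\cdot r$, and (ii) his quantum system. Two structural features drive the bound: Alice's private randomness $r$, uniform and independent of everything produced so far, makes $s$ equally likely to equal $a$ or $a \oplus x$; and by no-signalling, the marginal of Bob's system is independent of $x$. I would first verify, by a direct case analysis over $(x, r, s)$, that even an idealised Bob with full classical access to $a$ guesses $x$ correctly with probability exactly $3/4$. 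To lift this to quantum Bob, I would cast the cheating problem as Kitaev's semidefinite program for cheating probabilities and exhibit a dual feasible point of value $3/4$; equivalently, one writes $\PB = \tfrac{1}{2}\bigl(1 + \tfrac{1}{2}\sum_s \Pr[s]\,\|\rho_s^0 - \rho_s^1\|_1\bigr)$ and bounds each trace-distance term using the $r$-randomisation.

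\emph{Bound on} $\PA$. Cheating Alice adaptively sets $x = g$ to force $c = 0$, so the only binding constraints are Tests 1 and 2. Test 1 restricts her revealed $a$ to $\{s, s \oplus x\}$, granting exactly one bit of freedom beyond her commitment $s$. Test 2 is a GHZ check with inputs $(y,z)$ chosen uniformly subject to $x \oplus y \oplus z = 1$. I would reduce Alice's problem to a three-party non-local game between her cheat system and Bob's two boxes, and then to a CHSH-type Bell inequality by averaging over the uniform $x$ (noting that there are exactly two allowed $(y,z)$ pairs per $x$, so each conditional subgame is effectively a two-input correlator). Tsirelson's bound then yields $\PA \le \cos^2(\pi/8)$. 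Saturation for both bounds is verified by direct computation with the GHZ state $|\psi\rangle = (|000\rangle + |111\rangle)/\sqrt{2}$: Alice's $\sigma_{\hat n}$ measurement with $\hat n = (\hat x + \hat y)/\sqrt{2}$ implements the optimal CHSH rotation, and Bob's split $\sigma_x/\sigma_y$ strategy extracts, via the Mermin--GHZ correlators, exactly the information needed to realise the $3/4$ attack.

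\emph{Expected main obstacle.} The hardest step is the CHSH reduction for Alice. The coupling between her commitment $s$, her adaptive choice of $x$, and Bob's random $(y,z)$ means that the one-bit freedom in $a$ must be tracked case-by-case on $x$, and a naive averaging risks leaking slack. If a clean two-party reduction proves elusive, I would instead bound $\PA$ via a low level of the NPA hierarchy on the three-party game and read off the CHSH-type certificate from the numerical dual.
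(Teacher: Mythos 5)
The paper does not prove this lemma at all --- it is imported verbatim from \cite{Silman2011} --- so I am comparing your proposal against the original argument. Your overall architecture is the right one and matches it: a Helstrom/trace-distance bound exploiting the $r$-randomisation for $\PB$, and a reduction of cheating Alice to a CHSH-type game bounded by Tsirelson for $\PA$, with saturation checked on the honest GHZ implementation. The $\PA$ half is sound as sketched: for the upper bound you may merge Bob's two boxes into a single party (this only strengthens the adversary you are bounding), the two allowed $(y,z)$ pairs per $x$ become a binary input $w$, the winning condition becomes $a\oplus(b\oplus c)=1\oplus(x\wedge w)$, and the fact that Alice must commit her $x=0$ answer $s$ in advance only restricts her strategy set, so Tsirelson gives $\cos^2(\pi/8)$; the committed intermediate measurement $\sigma_{\hat n}$ saturates it.

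The genuine flaw is in your $\PB$ warm-up. A Bob with ``full classical access to $a$'' who \emph{also} keeps his quantum boxes is not bounded by $3/4$: knowing $a$, he must distinguish the two \emph{pure} states $\ket{\Phi(0,a)}$ and $\ket{\Phi(1,a)}$, whose overlap has modulus $1/\sqrt2$, so Helstrom gives him $\cos^2(\pi/8)>3/4$. Your purely classical case analysis happens to also give $3/4$, but neither version dominates the actual adversary, so nothing ``lifts.'' The correct mechanism is the opposite of giving Bob $a$: because $a$ is hidden and $s=a\oplus x\cdot r$ with $r$ uniform, conditioning on $x=1$ averages uniformly over $a$, so Bob's conditional state is exactly his unconditional marginal $\rho$ and $s$ is uniform; writing $\sigma_s$ for the subnormalised states $p(a{=}s|x{=}0)\,\rho_{s|x=0}$ (so $\rho=\sigma_0+\sigma_1$ by no-signalling), the guessing probability is $\tfrac12+\tfrac14\|\sigma_0-\sigma_1\|_1\le\tfrac34$. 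Note two further points: (i) your displayed formula with a single $\Pr[s]$ presumes $s$ is uniform under both hypotheses, which need not hold for adversarial boxes --- use the subnormalised form, which also supplies the device-independence of the bound (arbitrary boxes, arbitrary purification held by Bob); (ii) when you verify saturation ``by direct computation,'' be careful with conventions: on the post-measurement states $(\ket{00}+(-1)^a i^x\ket{11})/\sqrt2$ the correlator of $\sigma_x\otimes\sigma_y$ reveals $a$ only in the $x=1$ branch, where $a$ is already masked by $r$; it is the $\sigma_x\otimes\sigma_x$ correlator (equal to $(-1)^s$ when $x=0$ and vanishing when $x=1$) that realises the $3/4$ guess, so reconcile the labelling before concluding.
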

 
{Note that both players can cheat maximally assuming they share a GHZ
state and the honest player measures along the associated basis. This is why it was asserted 
that even though the cheating player could potentially tamper
with the boxes before handing them to the honest player,
exploiting this freedom does not offer any advantage to the cheating
player. 

Clearly, if we take \Algref{SCF} as is and treat it like a weak coin flipping protocol, this conclusion would continue to hold. As motivated in the introduction, we consider a minor, yet crucial, modification to \Algref{SCF}. Observe that in \Algref{SCF} only Bob performs the test round,
while in weak coin flipping there is a notion of Alice winning and Bob winning which may be leveraged. More precisely,
if $x\oplus g=0$, i.e. the outcome corresponding to ``Alice wins'',
we can imagine that Bob continues to perform the test to ensure (at
least to some extend) that Alice did not cheat. However, if $x\oplus g=1$,
i.e. the outcome corresponding to ``Bob wins'', we can require Alice
to now complete the GHZ test to ensure that Bob did not cheat. Since we analyse this protocol in detail, we state it as \Algref{WCF}, somewhat redundantly below. We have emphasised the changes compared to \Algref{SCF} in {\color{blue} blue}.

\begin{varalgorithm}{W}
	\caption{\quad Weak Coin Flipping version of \Algref{SCF} (Italics indicate the differences with \Algref{SCF})} 
	\label{alg:WCF} 
	Alice has one box and Bob has two boxes. 
	Each box takes one binary input and gives one binary output and are designed to play the optimal GHZ game strategy. 
	(Who creates and distributes the boxes is not important in the DI setting.) 
	\begin{enumerate}
		
	    \item Alice chooses a uniformly random input to her box $x \in_R \{ 0, 1 \}$ and obtains the outcome $a$. 
	    She chooses another uniformly random bit $r \in_R \{ 0, 1 \}$ and computes $s = a \oplus (x \cdot r)$.
	    She sends $s$ to Bob.
	    
	    \item Bob chooses a uniformly random bit $g \in_R \{ 0, 1 \}$ and sends it to Alice. 
	    (We may think of $g$ as Bob's ``guess'' for the value of $x$.) 
	    
	    \item Alice sends $x$ to Bob. 
	    They both compute the output $c = x \oplus g$. 
	    This is the outcome of the protocol assuming neither Alice nor Bob aborts. 

	    \item Test rounds:
	    \begin{enumerate}
	        \item {\color{blue}\emph{If $x\oplus g=0$,}} Bob tests Alice
            \begin{enumerate} 
                \item[\textup{Test 1}:] Alice sends $a$ to Bob. Bob sees if $s = a$ or $s = a \oplus x$.                If this is not the case, he aborts. 
                \item[\textup{Test 2}:] Bob chooses $y,z\in_R\{0,1\}$ uniformly at random such that $x\oplus y\oplus z=1$ and then performs a GHZ using $x,y,z$ as the inputs and $a,b,c$ as the output from the three boxes. He aborts if this test fails.
            \end{enumerate} 
	
			{\color{blue} 
	        \item \emph{If $x\oplus g=1$, Alice tests Bob}
	        \begin{enumerate}
	            \item[\textup{\emph{Test 3:}}] \emph{Alice chooses $y,z\in_{R}\{0,1\}$  uniformly at random  such that $x\oplus y\oplus z=1$ and sends them to Bob. Bob inputs $y,z$ into his boxes, obtains and sends $b,c$ to Alice. Alice tests if $x,y,z$ as inputs and $a,b,c$ as outputs, satisfy	the GHZ test. She aborts if this test fails.}
            \end{enumerate}
			}
        \end{enumerate}
        
	    \item If Alice and Bob do not abort, they both accept the value of $c$ as the outcome of the protocol.
	    
    \end{enumerate} 
\end{varalgorithm}

While it is not surprising that $p_A^*(\mathcal{W})=p_A^*(\mathcal{P})=\cos^2(\pi/8)$, it turns out that $p_{B}^{*}(\mathcal{W})=p_B^*(\mathcal{P})=3/4$, despite the additional test that Alice performs i.e. $P^*_B$ for \Algref{WCF} is not lowered. Yet, this is not quite a setback---one can show that the best cheating strategy now deviates from the GHZ state and measurements for the honest player, suggesting that a cheating player \emph{does} benefit from tampering with the boxes. Consequently, adding a self-testing step before initiating \Algref{WCF}, may potentially improve its security and as we shall see in the following subsections, it indeed does.

A remark about the limitation of self-testing in this setting. We note that no self-testing scheme can be concocted which simultaneously self-tests Alice and Bob's boxes. More precisely, no such procedure can ensure that Alice and Bob share a GHZ state (Alice one part, Bob the other two, for instance) because this would mean perfect (or near perfect) SCF is possible which,\footnote{More precisely, note that once a GHZ state has been shared, and we are in the device dependent setting, then the protocol would be for each player to make a projective $\sigma_z$ measurement. What the malicious player does becomes irrelevant for the security.} recall, is forbidden even in the device dependent case.\footnote{More precisely, Kitaev \cite{Kitaev03} showed that for any SCF protocol, $\epsilon\ge\frac{1}{\sqrt{2}}-\frac{1}{2}$. Note that the protocol in this paper does not violate the bound---we propose coin flipping protocols with not-too-small cheating probabilities.} {In fact, \cite{bansal2023impossibility} shows that the impossibility extends to all devices that produce non-product correlations.}

}

\subsection{Alice self-tests | Protocol $\mathcal{P}$} 
We begin by explicitly stating \Algref{AliceSelfTests} (where Alice self-tests her boxes before initiating \Algref{WCF}). In the honest implementation, the quantum device---the triple of boxes---used in \Algref{AliceSelfTests} are characterised by the GHZ state and measurements (see \Claimref{Quantum-boxes-pass}).

\begin{varalgorithm}{P}
	\caption{\quad Alice self-tests} \label{alg:AliceSelfTests}  

	Alice starts with $n$ boxes, indexed from $\rom{1}_1$ to $\rom{1}_n$. 
	Bob starts with $2n$ boxes, the first half indexed by $\rom{2}_1$ to $\rom{2}_n$ and the last half indexed by $\rom{3}_1$ to $\rom{3}_n$. 
	The triple of boxes $(\rom{1}_i, \rom{2}_i, \rom{3}_i)$ is meant to play the optimal GHZ game strategy.   
	\begin{enumerate}    
	\item Alice selects a uniformly random index $i \in \{ 1, \ldots, n \}$ and asks Bob to send her all the boxes \emph{except} those indexed by $\rom{2}_i$ and $\rom{3}_i$. 
	\item Alice performs $n-1$ GHZ tests using the $n-1$ triples of boxes she has, making sure there is no communication between any of them, e.g. by shielding the boxes (in the relativistic settings, coin flipping is possible). 
	\item Alice aborts if \emph{any} of the GHZ tests fail. 
	Otherwise, she announces to Bob that they can use the remaining boxes for Protocol~\ref{alg:WCF}.  
	\end{enumerate} 
\end{varalgorithm}

{
\begin{prop}
    \label{prop:AliceSelfTests} Let $\mathcal{P}$ denote \Algref{AliceSelfTests} and suppose \Conjref{Pcont} holds. Then, in the $n\to\infty$ limit (i.e. very large number of devices are used in the self-test step), Alice's cheating probability $p_{A}^{*}(\mathcal{P})\le\cos^{2}(\pi/8)\approx0.852$ and Bob's cheating probability $p_{B}^{*}(\mathcal{P}) \le 0.667$. 
\end{prop}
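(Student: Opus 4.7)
I would treat the two bounds separately, as the cheating parties have very different sources of power in $\mathcal{P}$.

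For $p_A^*(\mathcal{P}) \le \cos^2(\pi/8)$, the self-testing step gives cheating Alice no additional leverage. She alone runs the $n-1$ GHZ tests on her own boxes and decides whether to abort, so she may simply simulate ``all tests passed'' regardless of what she does. Moreover, the $2(n-1)$ boxes she receives from Bob can be discarded, and she may supply her own state for the retained triple. Any cheating strategy in $\mathcal{P}$ therefore reduces to a cheating strategy in $\mathcal{W}$ with at least the same success probability of forcing $c=0$, giving $p_A^*(\mathcal{P}) \le p_A^*(\mathcal{W}) = \cos^2(\pi/8)$, where the latter equality follows from the same argument that yields $p_A^*(\mathcal{S}) = \cos^2(\pi/8)$ in \Lemref{SCFstandard}, since the only modification in $\mathcal{W}$ is an extra test performed by Alice and not by Bob.

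For $p_B^*(\mathcal{P}) \le 0.667$, I proceed in three steps. First, I consider the most general cheating Bob, who prepares an arbitrary joint state on the $2n$ boxes (possibly entangled across triples) together with a choice of measurement operators. Since Alice picks the retained index $i$ uniformly and aborts if any of the $n-1$ GHZ tests fails, one needs a leave-one-out estimation: conditioned on Alice not aborting, as $n\to\infty$ the retained triple $(\mathrm{I}_i,\mathrm{II}_i,\mathrm{III}_i)$ passes a fresh GHZ test with probability $1-\epsilon(n)$, where $\epsilon(n) \to 0$. I would phrase this estimation as a linear program over the vector of triple-wise GHZ winning probabilities, exploiting the fact that GHZ has perfect completeness, which is the template advertised in \Subsecref{EstimateGHZ}. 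Second, I invoke \Lemref{rigidityGHZ}: this output statistic forces the existence of a local isometry bringing the state and measurements of the retained triple within $f(\epsilon(n))$ of the ideal GHZ state and $\sigma_x/\sigma_y$ measurements. Third, for an exactly GHZ-distributed triple, cheating Bob's probability in the subsequent $\mathcal{W}$ can be cast as the value $\eta_0$ of a semi-definite program in the Kitaev--Mochon style, since Alice's measurements are now known. Numerically solving this SDP yields $\eta_0 \approx 0.667$. Invoking \Conjref{Pcont}, the cheating probability for an $f(\epsilon(n))$-close device converges to $\eta_0$ as $n\to\infty$, giving the claimed bound.

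The principal obstacle is the leave-one-out estimation in the first step, because Bob may correlate the $n$ triples arbitrarily, so a direct concentration argument does not apply; this is why the linear-programming reformulation is the right tool, as it encodes the global abort event in terms of the marginal winning probabilities. A secondary difficulty is that Step three depends on \Conjref{Pcont}: in principle one should turn the optimisation over $f(\epsilon)$-close (state, measurement) pairs into an SDP in a dimension independent of $\epsilon$, but granted the conjecture the only remaining task is a finite numerical SDP computation to certify $\eta_0 \approx 0.667$.
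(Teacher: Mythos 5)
Your overall architecture matches the paper's: cheating Alice is handled by noting that the self-test is run by Alice herself and hence gives her no extra power, reducing to $\mathcal{W}$; cheating Bob is handled by (i) an estimation step for the retained triple, (ii) GHZ rigidity via \Lemref{rigidityGHZ}, (iii) the SDP for the exact-GHZ case, and (iv) \Conjref{Pcont}. The Alice half and steps (ii)--(iv) are essentially as in the paper.

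The gap is in step (i), and it is precisely the subtlety the paper flags in \Subsecref{EstimateGHZ}. You claim that \emph{conditioned on Alice not aborting}, the retained triple wins the GHZ game with probability $1-\epsilon(n)$ with $\epsilon(n)\to 0$. This conditional statement is false in general: Bob can prepare boxes that almost always fail the tests, so that $\Pr(\Omega)$ is tiny, yet conditioned on $\Omega$ the retained triple is arbitrarily bad. What \Propref{security} actually provides is a bound on the \emph{joint} probability $\Pr[\Omega\cap\bar{T}]\le \frac{1}{1-\epsilon+n\epsilon}$, where $T$ is the event that the retained box wins with probability at least $1-\epsilon$; the paper explicitly points out that $\Pr[\bar{T}\mid\Omega]$ cannot be usefully bounded. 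The repair is to carry this as an additive error in the final estimate, $p_B^*(\mathcal{P})\le p_B^*(\mathcal{P}\mid\Omega T)+\Pr[\Omega\cap\bar{T}]\le p_B^{*\epsilon}+\frac{1}{n\epsilon}$, and then to scale the threshold and the number of devices jointly (the paper takes $\epsilon=1/\lambda$ and $n=\lambda^3$) so that simultaneously $\epsilon\to 0$ (needed for rigidity and for \Conjref{Pcont} to apply) and $n\epsilon\to\infty$ (needed to kill the error term). Your single function $\epsilon(n)$ conflates these two roles, and as stated your estimation claim would not survive the adversary described above.
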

}

{We defer the proof to \Subsecref{SDP-when-Alice} and \Subsecref{SDP-when-Alice-N}. As remarked in the introduction, the value for $p_{B}^{*}(\mathcal{P})$ is lower than $p^*_B(\mathcal{W})$ and was obtained by numerically solving the corresponding SDP while the analysis for cheating Alice is the same as that of the original protocol. }

{
To write the associated security statement that we use later for compositions, we make the following assumption. (We also state the corresponding statement for \Algref{BobSelfTests}, i.e. when Bob self-tests.)
\begin{assumption}\label{assu:GHZwinwithepsilon}
    In protocol $\mathcal{P}$ ($\mathcal{Q}$, resp.),
Alice (Bob, resp.) does not perform the box verification step and instead
it is assumed that her box is (his boxes are, resp.) taken from a triple of
boxes which win the GHZ game with probability $1-\epsilon$. Here $\eps$ is a decreasing function of $n$ satisfying $\lim_{n\to \infty} \epsilon=0$. 
\end{assumption}
Recalling the definition of cheat vectors from the introduction (see \Defref{CheatVectors}), and using \Assuref{GHZwinwithepsilon} above instead of the continuity conjecture, \Conjref{Pcont}, we have the following:
} 

\begin{lem}
\label{lem:AliceSelfTests}Let $\mathcal{P}$ denote \Algref{AliceSelfTests} and suppose \Assuref{GHZwinwithepsilon} holds. Further, let $c_{0},c_{1},c_{\perp}\in\mathbb{R}$, and $\mathbb{C}_{B}(\mathcal{P})$ be the set of cheat vectors for Bob (see \Figref{cheatVectors_ProtocolP}). 
Then, as $n\to\infty$, the solution
to the optimisation problem $\max(c_{0}\alpha+c_{1}\beta+c_{\perp}\gamma)$
over $\mathbb{C}_{B}(\mathcal{P})$ approaches that of an SDP over variables of constant dimension (wrt $n$). (In particular, i.e. for $c_{0}=c_{\perp}=0$ and $c_{1}=1$,
$p_{B}^{*}(\mathcal{P}) \approx 0.667$.)
\end{lem}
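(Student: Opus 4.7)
The plan is to characterise Bob's cheat vectors $\mathbb{C}_B(\mathcal{P})$ as the feasible region of an optimisation over his shared state and POVMs, and then show, using robust self-testing, that in the limit $n\to\infty$ this optimisation collapses to a semidefinite program (SDP) whose variables live in a Hilbert space of dimension independent of $n$. A most-general strategy for a dishonest Bob is, by purification and Naimark dilation, a pure state $|\psi\rangle_{ABE}$ in which $A$ is Alice's remaining box, $B$ holds Bob's two boxes, and $E$ is his private quantum memory, together with finitely many projective measurements on $BE$ indexed by the classical transcript (the bits $s$, $x$, and, in the $x\oplus g=1$ branch, the challenges $y,z$). Alice behaves honestly: she measures $M_{a|x}^A$, picks $r$ uniformly, sends $s=a\oplus(x\cdot r)$, receives $g$, and either runs Test~3 on the returned $b,c$ or does not. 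Each component of the cheat vector $(v_A,v_B,v_\perp)$ is then a \emph{linear} functional of $(|\psi\rangle\!\langle\psi|,\{\text{Bob's projectors}\})$, since it is a sum of Born-rule probabilities averaged over uniform classical randomness. Under Assumption~\ref{assu:GHZwinwithepsilon}, the state $|\psi\rangle_{AB}$ together with the programmed measurements wins the GHZ game with probability $1-\epsilon(n)$, where $\epsilon(n)\to 0$ as $n\to\infty$.

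Next, I would invoke Lemma~\ref{lem:rigidityGHZ} to pin down Alice's side up to a local isometry. There exists $\Phi^A$ on $\mathcal{H}^A$ such that $\|\Phi^A(M_{a|x}^A|\psi\rangle)-\Pi_{a|x}^A|\mathrm{GHZ}\rangle\otimes|\mathrm{junk}\rangle\|\le f(\epsilon)$ for all $a,x$. Because $\Phi^A$ acts only on $A$ and Bob is free to absorb the junk register into $E$, any cheat vector realised with Alice's true ``$\epsilon$-close'' state and measurements is within $O(f(\epsilon))$ of a cheat vector realised by a strategy in which Alice's register is a \emph{single qubit} in the ideal GHZ state, measured with the ideal $\sigma_x/\sigma_y$ projectors. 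This is the heart of the argument: each protocol probability defining $(v_A,v_B,v_\perp)$ is a linear functional of $M_{a|x}^A|\psi\rangle$, so one applies the triangle inequality together with the standard bound $|\mathrm{Tr}(N(\rho-\sigma))|\le\|N\|\cdot\|\rho-\sigma\|_1$ to each summand.

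In the resulting reduced problem Alice's register is a fixed qubit with fixed Pauli measurements. The unknowns are then a bipartite state $\rho_{AB'}$ with $\dim(A)=2$ and $\dim(B')$ bounded by the (constant, $n$-independent) number of classical transcripts via Naimark, together with finitely many POVMs on $B'$; positivity, partial-trace consistency on $A$, and POVM completeness are all affine/semidefinite constraints, while the objective $c_0\alpha+c_1\beta+c_\perp\gamma$ is linear. Hence the reduced problem is an SDP of constant dimension, and the original optimum converges to its value at rate $O(f(\epsilon))$. Specialising to $c_0=c_\perp=0$, $c_1=1$, one evaluates the SDP numerically---after exploiting the obvious symmetries in $r$ and in the uniform Test~3 challenges---and reads off $p_B^*(\mathcal{P})\approx 0.667$.

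The main obstacle is controlling the dimension blow-up. A priori Bob's memory $E$ is unbounded, so one must argue at the level of \emph{values} of linear functionals rather than compare feasible sets of operators directly; the linearity of the cheat probabilities and the trace-norm closeness from Lemma~\ref{lem:rigidityGHZ} are exactly what makes this work for Protocol~$\mathcal{P}$, because Alice's box is measured \emph{before} any interaction with Bob's registers, so the self-testing error never has a chance to compound through subsequent rounds. The analogous statement for Protocol~$\mathcal{Q}$ is harder for the converse reason---there the self-tested box is held by Bob and only used after several rounds of interaction---which is why the two cases are isolated as separate continuity conjectures rather than being combined into one.
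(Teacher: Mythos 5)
Your overall architecture---characterise Bob's cheat vector as an optimisation over his state and measurements, use GHZ rigidity to pin down Alice's box, and reduce to a constant-dimension SDP evaluated numerically---is the right one, and your closing remark about why $\mathcal{Q}$ is harder than $\mathcal{P}$ matches the paper's discussion. However, there are two concrete gaps. First, your ``reduced problem'' is not an SDP as written: you keep both the bipartite state $\rho_{AB'}$ \emph{and} Bob's POVMs as variables, which makes the objective bilinear, and the dimension of $B'$ is not ``bounded via Naimark'' (dilation enlarges dimension, it does not cap it). The paper's proof instead uses the Kitaev-style formulation: Bob's measurements are eliminated entirely, and his most general strategy is encoded by a sequence of density operators $\rho_2\in\Pos(XARG)$, $\rho_3\in\Pos(XARGYD)$ on Alice's registers plus the classical message registers, linked by partial-trace consistency constraints ($\tr_G(\rho_2)=\tr_{IJS}[\kb{\Psi_1}]$, $\tr_D(\rho_3)=\rho_2\otimes\tfrac{\I_Y}{2}$), with fixed projectors in a linear objective. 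That is what makes the problem a genuine SDP of constant dimension; your version does not establish this.

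Second, and more fundamentally, the continuity step you prove by ``triangle inequality plus $|\Tr(N(\rho-\sigma))|\le\|N\|\,\|\rho-\sigma\|_1$'' is precisely what the paper does \emph{not} prove: it is stated as \Conjref{Pcont}, and \Lemref{AliceSelfTests} is established only modulo that conjecture. The authors explicitly identify the obstacles your sketch glosses over: the adversary's Hilbert spaces have arbitrary dimension that may scale with $\epsilon$ (so one must know that $f(\epsilon)$ in \Lemref{rigidityGHZ} is dimension-independent), and the rigidity statement is for a pure tripartite state, so one must justify that it survives extension by Bob's entangled private memory before the per-branch vectors $M_{a|x}^A\ket{\psi}$ can be replaced by their ideal counterparts. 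Your argument is plausibly the ``no obvious obstacles'' route the authors allude to for $\mathcal{P}$, but as written it is a sketch of an open conjecture, not a proof; to match the paper you should either supply these missing technical ingredients or state the lemma conditionally on \Conjref{Pcont}.
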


{The fact that optimising linear functions in Bob's cheat vectors is an SDP becomes useful in \Secref{Second-Technique} when we compose these protocols. Again, the proofs are deferred to \Subsecref{SDP-when-Alice,SDP-when-Alice-N}.}

\subsection{Bob self-tests | Protocol $\mathcal{Q}$}

We analogously define \Algref{BobSelfTests}---where Bob self-tests his boxes before initiating \Algref{WCF}. 

\begin{varalgorithm}{Q}
	\caption{\quad Bob self-tests}
\label{alg:BobSelfTests}

	Alice starts with $n$ boxes, indexed from $\rom{1}_1$ to $\rom{1}_n$. 
	Bob starts with $2n$ boxes, the first half indexed by $\rom{2}_1$ to $\rom{2}_n$ and the last half indexed by $\rom{3}_1$ to $\rom{3}_n$. 
	The triple of boxes $(\rom{1}_i, \rom{2}_i, \rom{3}_i)$ is meant to play the optimal GHZ game strategy.   
	\begin{enumerate}    
	\item Bob selects a uniformly random index $i \in \{ 1, \ldots, n \}$ and asks Alice to send him all the boxes \emph{except} those indexed by $\rom{1}_i$. 
	\item Bob performs $n-1$ GHZ tests using the $n-1$ triples of boxes he has, making sure there is no communication between any of them. 
	\item Bob aborts if \emph{any} of the GHZ tests fail. 
	Otherwise, he announces to Alice that they can use the remaining boxes for Protocol~\ref{alg:WCF}.  
	\end{enumerate} 

\end{varalgorithm} 

Consider \Algref{WCF} and \Algref{SCF}. Suppose Bob is honest while Alice is malicious, and that at step 3, she sends an $x$ s.t. $x\oplus g = 0$. Under these conditions, observe that Bob's actions are identical in both \Algref{WCF} and \Algref{SCF}. Since it is already known from \Lemref{SCFstandard} that Alice does not gain anything from tampering with Bob's boxes, the same conclusion holds for \Algref{WCF}. Thus, for \Algref{BobSelfTests}, we do not expect any improvement in Bob's security, viz. $p^*_A(\mathcal{Q})=p^*_A(\mathcal{W})$ given that \Algref{BobSelfTests} only ensures Alice does not tamper with Bob's boxes. It is also immediate that $p^*_B(\mathcal{Q})=p^*_B(\mathcal{W})$. 
{This means that we do not see any advantage of self-testing at this stage and therefore the analogue of \Propref{AliceSelfTests} is not stated. However, self-testing does help when compositions are considered. More precisely, analogously to \Algref{AliceSelfTests}, optimisation of linear functions of Alice's cheat vectors now becomes an SDP and we reap the benefits of this simplification in the next section. }

\begin{lem}
\label{lem:BobSelfTests} 
Let $\mathcal{Q}$ denote \Algref{BobSelfTests} and suppose \Assuref{GHZwinwithepsilon} holds. %
Further, let $c_{0},c_{1},c_{\perp}\in\mathbb{R}$, and $\mathbb{C}_{A}(\mathcal{Q})$
be the set of cheat vectors for Alice (see \Figref{cheatVectors_ProtocolQ}). Then, as $n\to\infty$, the
solution to the optimisation problem $\max(c_{0}\alpha+c_{1}\beta+c_{\perp}\gamma)$
over $(\alpha,\beta,\gamma)\in\mathbb{C}_{A}(\mathcal{Q})$ approaches
that of an SDP of constant dimension (wrt $n$). 
\end{lem}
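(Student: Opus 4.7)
The plan is to reduce Alice's cheating optimisation to an SDP of constant dimension via robust self-testing, and then argue convergence as $n \to \infty$. By \Assuref{GHZwinwithepsilon}, the remaining triple $(\rom{1}_i, \rom{2}_i, \rom{3}_i)$ wins the GHZ game with probability $1 - \epsilon$, where $\epsilon = \epsilon(n) \to 0$. Applying the robust GHZ self-testing result (\Lemref{rigidityGHZ}), I obtain a local isometry $\Phi = \Phi^A \otimes \Phi^{BC}$ after which the relevant joint state is $f(\epsilon)$-close to $|\mathrm{GHZ}\rangle \otimes |\mathrm{junk}\rangle$ and Bob's two box measurements are $f(\epsilon)$-close to the ideal $\sigma_x / \sigma_y$ Pauli measurements on a fixed pair of qubits.

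Next I would write down Alice's cheating optimisation in the ideal case $\epsilon = 0$. Bob's Hilbert space is then fixed to $\mathbb{C}^2 \otimes \mathbb{C}^2$, and his actions across all rounds of $\mathcal{W}$ (drawing $g$ uniformly, and applying either the GHZ-test measurements of Step 4a or the input-indexed measurements of Step 4b/Test 3) are prescribed by the honest specification. Following the standard Kitaev--Mochon SDP framework for coin-flipping cheating probabilities, I would introduce primal variables consisting of Bob's reduced state on his constant-dimensional system after each message round, together with classical registers for $s, g, x, a, y, z, b, c$ of bounded size; consistency across rounds is enforced by partial-trace constraints, while Alice's operations remain unconstrained except through these marginals. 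The cheat-vector coordinates $(\alpha, \beta, \gamma)$ are linear functionals of these states, so $c_0 \alpha + c_1 \beta + c_\perp \gamma$ is maximised by an SDP whose dimension is independent of $n$.

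For finite $n$, I would form the analogous SDP with the initial constraint relaxed so that Bob's reduced state and his measurement operators each deviate from the ideal values by $O(f(\epsilon))$ in trace norm, reflecting the guarantees extracted from \Lemref{rigidityGHZ}. Standard continuity of SDP optimal values with respect to trace-norm perturbations of the feasible set then yields that the value of this relaxed SDP converges, as $\epsilon \to 0$, to the ideal $\epsilon = 0$ SDP value, establishing the claim.

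The hard step, as flagged in the discussion of \Conjref{Qcont}, is ensuring that the SDP dimension is genuinely independent of $\epsilon$. Alice can hold an arbitrarily large ancilla entangled with her box, and may apply adaptive strategies across several rounds of classical interaction before Bob ever uses his boxes. Unlike Protocol $\mathcal{P}$, where Alice's self-tested box is measured immediately, here the delay between Bob's self-test and his use of the boxes means one cannot directly confine Alice's strategy to a constant-dimensional picture by collapsing her register. A Stinespring dilation on Alice's side would inflate her system's effective dimension with $\epsilon$; instead, the key will be a rank-reduction / extremality argument showing that optimal Bob-side SDP variables may be taken of constant rank uniformly in $\epsilon$, combined with a compactness argument in the $\epsilon \to 0$ limit.
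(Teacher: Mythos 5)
Your first two paragraphs track the paper's actual proof of the $\epsilon=0$ case in \Subsecref{SDP-when-Bob}: rigidity pins Bob's boxes to the GHZ state and Pauli measurements on a fixed four-dimensional system, and the Kitaev--Mochon device of representing cheating Alice by the honest Bob's reduced states after each message, linked by partial-trace constraints (with the deferred GHZ measurement implemented by a fixed unitary and the outcomes read off by fixed projectors), gives an SDP of constant dimension whose objective is $c_0\alpha+c_1\beta+c_\perp\gamma$. That part is essentially the paper's argument.

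The gap is in your third paragraph. ``Standard continuity of SDP optimal values with respect to trace-norm perturbations of the feasible set'' does not apply here, because for $\epsilon>0$ the optimisation is not a perturbation of a fixed finite-dimensional SDP: \Lemref{rigidityGHZ} only gives closeness after a local isometry into a larger space carrying a junk state, cheating Alice's side has unbounded dimension that may grow as $\epsilon\to0$, and Bob's boxes are measured only after further interaction with Alice, so one cannot simply replace his state and measurements by their ideal counterparts up to a trace-norm error in a space of fixed dimension. Your own fourth paragraph concedes exactly this and defers the rank-reduction/compactness argument, which leaves the convergence claim unproven; as written, the proposal asserts in one paragraph a convergence that it retracts in the next. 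You should also be aware that the paper does not prove this step either: the $\epsilon\to0$ limit is precisely \Conjref{Qcont}, and \Lemref{BobSelfTests} is established there only by combining the asymptotic SDP of \Subsecref{SDP-when-Bob} with that conjecture (which the authors flag as the harder of the two, for the very reason you identify). To match the paper you would either invoke \Conjref{Qcont} explicitly, or actually supply the missing dimension-independence argument.
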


\begin{figure}	
	\begin{centering}
		\subfloat[Cheat vectors $\mathbb{C}_B(\mathcal{P})$. The $x$-axis represents $v_B$ and the $y$-axis represents the smallest $v_{\perp}$, given $v_B$.\label{fig:cheatVectors_ProtocolP}]{
				\begin{centering}
				\includegraphics[width=7cm]{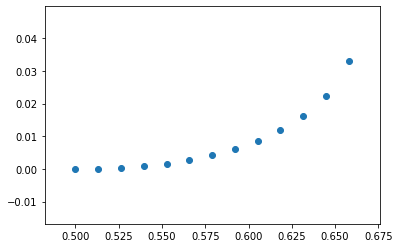}
				
				\end{centering}}
	\quad 
		\subfloat[Cheat vectors $\mathbb{C}_A(\mathcal{Q})$. The $x$-axis represents $v_B$ and the $y$-axis represents the smallest $v_{\perp}$, given $v_B$. \label{fig:cheatVectors_ProtocolQ}]{
			\begin{centering}
			
			\includegraphics[width=7cm]{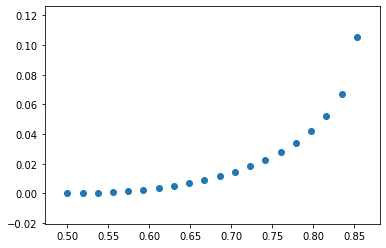}
			\end{centering}}
	\par
	\end{centering}
\caption{Cheat vectors for \Algref{AliceSelfTests} and \Algref{BobSelfTests}. Observe how compared to \Figref{cheatVectors_ProtocolP}, the abort probabilities in \Figref{cheatVectors_ProtocolQ} are higher making it more suitable for abort-phobic compositions.}	
\end{figure} 

The proof is again deferred to \Subsecref{SDP-when-Bob} and \Subsecref{SDP-when-Bob-N}.

\section{New protocols using abort-phobic compositions | Second Technique}\label{sec:Second-Technique}

{To convey the key idea, here also we assume \Assuref{GHZwinwithepsilon} except with $\epsilon=0$, i.e. 
the player that does the self-test, in fact, holds boxes corresponding exactly to a GHZ strategy. This simplification allows us to focus on the novelty in our composition technique. In \Secref{SecurityFiniteN} we complete the analysis by accounting for all the details skipped here.}

Central to the discussion in this section, will be the notion of \emph{polarity} introduced in \Subsecref{IntroSecondTechnique} and our results will apply to polarised protocols. Note that $\mathcal{W},\mathcal{P},$ and $\mathcal{Q}$ are all polarised. We begin by recalling that in \Subsecref{IntroSecondTechnique}, again, we had introduced a ``standard composition''---the simplest implementation of the ``winner gets polarity'' idea. Here, we restate this composition with more precision and introduce the notation we use for the more involved cases.

\subsection{Composition}
The {definition below} simply formalises the following---execute protocol $\mathcal{X}$ to determine who gets to choose the polarity of protocol $\mathcal{Y}$. We use $C$ with two parameters, as in $C(\mathcal{X},\mathcal{Y})$, to denote a single composition described above. We use $C(\mathcal{X})$ to denote repeated compositions of $\mathcal{X}$.
\begin{defn}[$C(\cdot,\cdot)$ and $C(\cdot)$]
\label{def:C}
Given two polarised WCF protocols, $\mathcal{X}$
and $\mathcal{Y}$, let $\mathcal{X}_{A},\mathcal{X}_{B}$ and $\mathcal{Y}_{A},\mathcal{Y}_{B}$
be their polarisations (see \Subsecref{IntroSecondTechnique}). 
Define
$C(\mathcal{X},\mathcal{Y})$ as follows:
\begin{enumerate}
\item Alice and Bob execute $\mathcal{X}_{A}$ and obtain outcome $X\in\{A,B,\perp\}$. 
\item 
\begin{enumerate}
\item If $X=A$, execute $\mathcal{Y}_{A}$ and obtain outcome $Y\in\{A,B,\perp\}$,
else 
\item if $X=B$, execute $\mathcal{Y}_{B}$ and obtain outcome $Y\in\{A,B,\perp\}$,
and finally
\item if $X=\perp$, set $Y=\perp$.
\end{enumerate}
Output $Y$.
\end{enumerate}
Let $\mathcal{Z}^{i+1}:=C(\mathcal{X},\mathcal{Z}^{i})$ for $i\ge1$,
and $\mathcal{Z}^{1}:=\mathcal{X}$. Then, formally, define $C(\mathcal{X}):=\lim_{i\to\infty}\mathcal{Z}^{i}$.\footnote{This formal notation is defined to let $p_A^*(C(\mathcal{X}))$ mean $\lim_{i\to \infty} p_A^*(\mathcal{Z}^i)$ and similarly for $p_B^*$ and $\epsilon$.} 
\end{defn}

\branchcolor{black}{The study of such composed protocols is simplified by assuming that
in an honest run, neither player outputs $\perp$ (abort), i.e. they
either output $A$ or $B$. We take a moment to explain this.

Consider any protocol $\mathcal{R}$ where, when both players are
honest, the probability of abort is zero. The protocols we have described
so far, satisfy this property, so long as we assume that honest players
can prepare perfect GHZ boxes. Such protocols are readily converted
into protocols where an honest player never outputs abort. 

For instance, suppose that in the execution of the aforementioned
protocol $\mathcal{R}$ (with no-honest-abort), Alice behaves honestly
but Bob is malicious. Suppose after interacting with Bob, Alice reaches
the conclusion that she must abort. Since she knows that if Bob was
honest, the outcome abort could not have arisen, she concludes that
Bob is cheating and declares herself the winner, i.e. she outputs
$A$. Similarly, when Bob is honest and after the interaction, reaches
the outcome abort, he knows Alice cheated and can therefore declare
himself the winner, i.e. output $B$. 

Whenever we modify a protocol so that an honest Alice (Bob) replaces
the outcome abort with Alice (Bob) winning, we say Alice (Bob) is
\emph{lenient}. This is motivated by the fact that when we compose
protocols, if Alice can conclude that Bob is cheating, and she still
outputs $A$ instead of aborting, she is giving Bob further opportunity
to cheat---she is being lenient.}
\begin{defn}[$\mathcal{R}$ with lenient players]
\label{def:lenientR} Suppose $\mathcal{R}$ is a WCF protocol such
that when both players are honest, the probability of outcome abort,
$\perp$, is zero. Then by ``\emph{$\mathcal{R}$ with lenient Alice
(Bob)}'', which we denote by $\mathcal{R}^{L\perp}$ ($\mathcal{R}^{\perp L}$),
we mean that Alice (Bob) follows $\mathcal{R}$ except that the outcome
$\perp$ replaced with $A$ ($B$). Finally, by ``\emph{lenient $\mathcal{R}$}'',
which we denote by $\mathcal{R}^{LL}$, we mean $\mathcal{R}$ with
lenient Alice and Bob. 
\end{defn}

\branchcolor{black}{For clarity and conciseness, we define $C^{LL}$ to be compositions
with lenient variants of the given protocols. We work out some examples
of such protocols and analyse their security in the following section.
These can be improved by considering $C^{L\perp}$ and $C^{\perp L}$---compositions
where only one player is lenient. We discuss those afterwards.}
\begin{defn}[$C^{LL}$, $C^{\perp L}$ and $C^{L\perp}$]
 Suppose a WCF protocol $\mathcal{X}$ can be transformed into its
\emph{lenient} variants (see \Defref{lenientR}). Then define 
\begin{align*}
C^{LL}(\mathcal{X},\mathcal{Y}) & :=C(\mathcal{X}^{LL},\mathcal{Y}),\\
C^{\perp L}(\mathcal{X},\mathcal{Y}) & :=C(\mathcal{X}^{\perp L},\mathcal{Y}),\quad\text{and}\\
C^{L\perp}(\mathcal{X},\mathcal{Y}) & :=C(\mathcal{X}^{L\perp},\mathcal{Y}).
\end{align*}
In words, $C^{LL}$ is referred to as a \emph{standard }composition,
while $C^{\perp L}$ and $C^{L\perp}$ are referred to as \emph{abort-phobic}
compositions. 
The single argument versions are analogously defined, i.e. $C^{LL}(\mathcal{X}):=C(\mathcal{X}^{LL})$, $C^{L\perp}(\mathcal{X}):=C(\mathcal{X}^{L\perp})$ and $C^{\perp L}(\mathcal{X}^{\perp L})$.
\end{defn}

We make two remarks. \emph{First}, the reader might wonder why do we not consider $C^{\perp\perp}$. Nothing prevents us from doing this, however, to obtain an improvement, one must also analyse such a composition. This is, in general, difficult in the device independent setting since one must optimise over both states and measurement settings. However, due to the self-testing step, the (asymptotic) analysis of $C^{\perp L}$ and $C^{L \perp}$ reduces to a semi-definite program which can be solved for protocols with a small number of rounds.
\emph{Second}, it is worth clarifying that for our analysis, we need not consider Alice aborting and Bob aborting as separate events. When both players are honest, correctness requires both players produce the same output. When one player is malicious and the other honest, we simply assume that the malicious player can learn the honest player's outcome before producing their outcome. Therefore, it suffices for the malicious player to only care about the output produced by the honest player. This aforementioned assumption only makes the malicious player stronger and therefore can be made in the security analysis.

\subsection{Standard Composition | $C^{LL}$}

\branchcolor{black}{We begin with the simplest case, standard composition, $C^{LL}$.
Let us take an example. Let $\mathcal{P}$ denote \Algref{AliceSelfTests}
and recall (see \Lemref{AliceSelfTests})
\begin{align*}
p_{A}^{*}(\mathcal{P}_{A}) & =:\alpha\approx0.852,\\
p_{B}^{*}(\mathcal{P}_{A}) & =:\beta\approx0.667.
\end{align*}
Note that therefore $p_{A}^{*}(\mathcal{P}_{B})=\beta$ and $p_{B}^{*}(\mathcal{P}_{B})=\alpha$.
Further, let $\mathcal{P}':=C^{LL}(\mathcal{P},\mathcal{P})$, i.e.
Alice and Bob (who are both lenient) first execute $\mathcal{P}_{A}$
and if the outcome is $A$, they execute $\mathcal{P}_{A}$, while
if the outcome is $B$, they execute $\mathcal{P}_{B}$. This is illustrated
in \Figref{Standard-composition-technique} where note that the event
abort doesn't appear due to the leniency assumption. This allows us
to evaluate the cheating probabilities for the resulting protocol
as 
\begin{align}
p_{A}^{*}(\mathcal{P}') & =\alpha\alpha+(1-\alpha)\beta=:\alpha^{(1)},\quad\text{and}\label{eq:pStarAPprimeA}\\
p_{B}^{*}(\mathcal{P}') & =\beta\alpha+(1-\beta)\beta=:\beta^{(1)}.\nonumber 
\end{align}
To see this, consider \Eqref{pStarAPprimeA}. Alice knows that if
she wins the first round, her probability of winning is $\alpha>\beta$.
She knows that in the first round, she can force the outcome $A$
with probability $\alpha$. From leniency, she knows that Bob would
output $B$ with the remaining probability.\footnote{Without leniency, this probability could have been shared between
the outcomes $\perp$ (abort) and $B$. Consequently, only upper bounds
could be obtained on $p_{A}^{*}(\mathcal{P}')$ and $p_{B}^{*}(\mathcal{P}')$
using only $\alpha$ and $\beta$ as security guarantees for $\mathcal{P}_{A}$.
Upper bounds, however, would not be enough to determine the polarity
of $\mathcal{P}'$ and stymie an unambiguous repetition of the composition
procedure (at least as it is defined). One could nevertheless compose
by hoping that the upper bounds can be used to accurately represent
the polarity. Regardless, this would still yield a protocol and the same calculation would yield correct bounds but the composition itself might be sub-optimal.}

A side remark: one consequence of this simplified analysis is that\footnote{$\alpha^{(1)}-\beta^{(1)}=(\alpha-\beta)\alpha-(\alpha-\beta)\beta=(\alpha-\beta)^{2}>0$}
$\alpha^{(1)}>\beta^{(1)}$. Intuitively, it means that polarity is
preserved by the composition procedure. Proceeding similarly, i.e.
defining $\mathcal{\mathcal{P}}'':=C^{LL}(\mathcal{P},\mathcal{P}')$,
and repeating $k+1$ times overall, one obtains\footnote{Again, note that $\alpha^{(k+1)}-\beta^{(k+1)}=(\alpha^{(k)}-\beta^{(k)})(\alpha-\beta)>0$.
} 
\begin{align*}
\alpha^{(k+1)} & =\alpha\alpha^{(k)}+(1-\alpha)\beta^{(k)}\\
\beta^{(k+1)} & =\beta\alpha^{(k)}+(1-\beta)\beta^{(k)}.
\end{align*}
In the limit of $k\to\infty$, one obtains 
\[
p_{A}^{*}(C^{LL}(\mathcal{P}))=p_{B}^{*}(C^{LL}(\mathcal{P}))=\lim_{k\to\infty}\alpha^{(k)}=\lim_{k\to\infty}\beta^{(k)}\approx0.8199.
\]
Proceeding similarly, one obtains for $X\in\{A,B\}$ and $\mathcal{X}\in\{\mathcal{W},\mathcal{Q}\}$,
\[
p_{X}^{*}(C^{LL}(\mathcal{X}))\approx0.836\,.
\]
We thus have the following. 
\begin{figure}
	
\begin{centering}
\includegraphics[scale = 1]{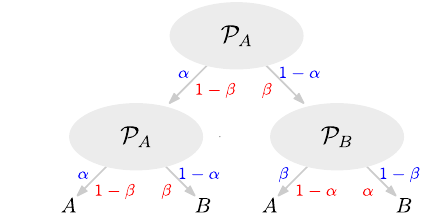}
\par\end{centering}
\caption{Standard composition of weak coin flipping protocols. Subprotocols only have two outcomes depending on the coin flip. Coloured labels indicate probabilities of outcomes for cheating Alice (blue) and cheating Bob (red).\label{fig:Standard-composition-technique}} 
\end{figure}
}
\begin{thm}\label{thm:CLL}
Let $\mathcal{W}$ and $\mathcal{Q}$ denote \Algref{WCF} and \Algref{BobSelfTests} respectively. Suppose $X\in\{A,B\}$ and $\mathcal{X}\in\{\mathcal{W},\mathcal{Q}\}$.
Then, asymptotically in the security parameter (i.e. in the limit that $\lambda\to \infty$), one has $p_{X}^{*}(C^{LL}(\mathcal{X})) \le 0.836$ and, assuming \Conjref{Qcont} holds, one has $p_{X}^{*}(C^{LL}(\mathcal{P})) \le 0.8199$. 
\end{thm}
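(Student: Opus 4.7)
The plan is to turn the informal recursion sketched in the text into a clean convergence argument for a two-dimensional linear dynamical system, and then plug in the numerical values of the base cheating probabilities provided by the preceding lemmas.

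\emph{Step 1 (base cheating probabilities).} First I would instantiate the pair $(\alpha,\beta) := (p_A^*(\mathcal{X}_A), p_B^*(\mathcal{X}_A))$ for each base protocol. For $\mathcal{X}=\mathcal{W}$ this is immediate from \Lemref{SCFstandard}, giving $\alpha=\cos^2(\pi/8)$ and $\beta=3/4$. For $\mathcal{X}=\mathcal{Q}$, as the excerpt already argues, the self-testing step does not alter either cheating probability, so the values are identical to those of $\mathcal{W}$. For $\mathcal{X}=\mathcal{P}$, I would invoke \Propref{AliceSelfTests} (which in turn relies on \Conjref{Pcont}) to obtain $\alpha=\cos^2(\pi/8)$ and $\beta\approx 0.667$. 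By symmetry, the polarised variants $\mathcal{X}_B$ satisfy $p_A^*(\mathcal{X}_B)=\beta$ and $p_B^*(\mathcal{X}_B)=\alpha$, so $\alpha>\beta$ throughout.

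\emph{Step 2 (recursion for $C^{LL}$).} Using \Defref{C} together with the leniency assumption (\Defref{lenientR}), the outcome of the first round of $C^{LL}(\mathcal{X},\mathcal{Z}^{(k)})$ is either ``$A$'' or ``$B$'' (never $\perp$), with the winner dictating the polarity of the subsequent round. A straightforward case analysis, exactly as indicated in the excerpt just before \Thmref{CLL}, yields the linear recursion
\begin{equation*}
\begin{pmatrix} \alpha^{(k+1)} \\ \beta^{(k+1)} \end{pmatrix}
= M \begin{pmatrix} \alpha^{(k)} \\ \beta^{(k)} \end{pmatrix},
\qquad
M := \begin{pmatrix} \alpha & 1-\alpha \\ \beta & 1-\beta \end{pmatrix},
\qquad
\begin{pmatrix} \alpha^{(0)} \\ \beta^{(0)} \end{pmatrix} = \begin{pmatrix} \alpha \\ \beta \end{pmatrix}.
\end{equation*}
The matrix $M$ is stochastic (rows sum to $1$), so $1$ is an eigenvalue with eigenvector $(1,1)^T$; its other eigenvalue is $\det M = \alpha-\beta \in (0,1)$, with eigenvector proportional to $(1-\alpha,-\beta)^T$. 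Decomposing the initial vector in this eigenbasis, the component along $(1,1)^T$ has coefficient $c_1 = \beta/(1-\alpha+\beta)$, and the transient component decays like $(\alpha-\beta)^k\to 0$.

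\emph{Step 3 (taking the limit and the numerics).} Consequently
\begin{equation*}
\lim_{k\to\infty} \alpha^{(k)} = \lim_{k\to\infty} \beta^{(k)} = \frac{\beta}{1-\alpha+\beta},
\end{equation*}
which by \Defref{C} equals $p_A^*(C^{LL}(\mathcal{X})) = p_B^*(C^{LL}(\mathcal{X}))$. Substituting the values from Step~1 gives $3/(5-\sqrt{2}) \approx 0.836$ for $\mathcal{X}\in\{\mathcal{W},\mathcal{Q}\}$, and approximately $0.8199$ for $\mathcal{X}=\mathcal{P}$ (expressible in closed form as $4(14+3\sqrt{2})/89$ if one sets $\beta = 2/3$). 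The bound is stated as $\le$ because the lemmas bound the base cheating probabilities from above, and both the recursion matrix $M$ and the fixed-point expression $\beta/(1-\alpha+\beta)$ are monotonically increasing in $\alpha$ and $\beta$, so upper bounds on $\alpha,\beta$ propagate through.

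\emph{Main obstacle.} The dynamical-system argument is routine; the only genuinely delicate point is justifying why the induced cheating probability at level $k+1$ is \emph{exactly} given by this recursion rather than merely upper-bounded by it. This relies on leniency: because an honest lenient player never outputs $\perp$, the probability that the cheating player forces a given outcome in the first round \emph{plus} the probability that the honest player is declared the winner sum to one, so the decomposition $\alpha^{(k+1)} = \alpha \cdot \alpha^{(k)} + (1-\alpha)\cdot\beta^{(k)}$ is an equality and polarity is preserved at each level (as $\alpha^{(k+1)}-\beta^{(k+1)} = (\alpha-\beta)(\alpha^{(k)}-\beta^{(k)}) > 0$), allowing the composition in \Defref{C} to be unambiguously iterated. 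For $\mathcal{X}=\mathcal{P}$, one additionally needs \Conjref{Qcont} (equivalently \Conjref{Pcont} at the bottom layer) to push the finite-$n$ analysis to the asymptotic cheating probabilities used above; the rest of the argument is insensitive to this.
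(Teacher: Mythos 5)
Your proposal is correct and follows essentially the same route as the paper: the same leniency-based case analysis yielding the linear recursion $\alpha^{(k+1)}=\alpha\alpha^{(k)}+(1-\alpha)\beta^{(k)}$, $\beta^{(k+1)}=\beta\alpha^{(k)}+(1-\beta)\beta^{(k)}$, the same preservation-of-polarity observation, and the same base values from \Lemref{SCFstandard} and \Propref{AliceSelfTests}. Your eigendecomposition of the stochastic matrix $M$ and the resulting closed-form fixed point $\beta/(1-\alpha+\beta)$ (giving $3/(5-\sqrt{2})$ and $4(14+3\sqrt{2})/89$) is a slightly more explicit way of computing the limit than the paper, which simply iterates numerically; both agree. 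The one piece you defer that the paper carries out explicitly (in \Subsecref{assembling}) is the finite-$\lambda$ bookkeeping for the protocols containing a self-test step: each composed round contributes an additive error $\epsilon^{\mathrm{st}}=1/(n\epsilon)$ from the event that the self-test passes but the surviving device is bad, so after $k$ rounds one must add $k\epsilon^{\mathrm{st}}$ to the bound, and the theorem's coupled scaling $k=\lambda$, $\epsilon=1/\lambda$, $n=\lambda^{3}$ is exactly what makes this vanish as $\lambda\to\infty$. This affects $\mathcal{Q}$ and $\mathcal{P}$ (not $\mathcal{W}$) and is needed to justify using the asymptotic cheating probabilities at every level of the recursion, but it does not change the structure of your argument.
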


{In \Thmref{CLL}, $\lambda$ is a security parameter that specifies the number of compositions, $k=\lambda$, and the number of devices used in the self-test step, $n=\lambda^3$. The theorem also uses a slightly different definition of $C^{LL}$ from the one described above. Details are deferred to \Secref{SecurityFiniteN} but the key idea remains the same.} 

\subsection{Abort Phobic Compositions | $C^{L\perp},C^{\perp L}$}

\branchcolor{black}{We now look at the case of abort phobic compositions, $C^{L\perp}$
and $C^{\perp L}$. We work through essentially the same example as
above and see what changes in this setting. As usual, let $\mathcal{P}$ denote \Algref{AliceSelfTests}
 and recall that as before 
\begin{align*}
p_{A}^{*}(\mathcal{P}_{A}) & =:\alpha\approx0.852,\\
p_{B}^{*}(\mathcal{P}_{A}) & =:\beta\approx0.667.
\end{align*}
In addition, we know from \Lemref{AliceSelfTests} that cheat vectors
for Bob, $(v_A,v_B,v_{\perp})\in\mathbb{C}_{B}(\mathcal{P}_{A})$
admit a nice characterisation courtesy of the self-testing step. Let
$\mathcal{P}':=C^{\perp L}(\mathcal{P},\mathcal{P})$, i.e. Alice
and Bob execute $\mathcal{P}_{A}$ and if the outcome is $A$, they
execute $\mathcal{P}_{A}$ while if the outcome is $B$, they execute
$\mathcal{P}_{B}$. Bob is assumed to be lenient so an honest Bob
never outputs abort, $\perp$. However, an honest Alice can output
abort, $\perp$ so we keep that output in the illustration, \Figref{Abort-Augmented-Composition}.
Our goal is to find $p_{A}^{*}(\mathcal{P}')$ and $p_{B}^{*}(\mathcal{P}')$.
The former is the same as before because Bob is lenient: 
\[
p_{A}^{*}(\mathcal{P}')=\alpha\cdot\alpha+(1-\alpha)\cdot\beta.
\]
Clearly, $p_{B}^{*}(\mathcal{P}')\le\beta\alpha+(1-\beta)\beta$ but
this bound may not be tight because $(1-\beta)$ is the combined probability
of Alice aborting and Alice outputting $A$. However, we can use cheat
vectors to obtain 
\[
p_{B}^{*}(\mathcal{P}')=\max_{(v_{A},v_{B},v_{\perp})\in\mathbb{C}_{B}}v_{B}\alpha+v_{A}\beta
\]
which is an SDP one can solve numerically. Unlike the previous case,
the polarity of the resulting protocol, $\mathcal{P}'$, might have
flipped (compared to the polarity of $\mathcal{P}$). 

Repeating this procedure, one can consider $\mathcal{P}'':=C^{\perp L}(\mathcal{P},\mathcal{P}')$
and obtain $p_{A}^{*}(\mathcal{P}'')$ directly as illustrated above
and numerically solve for $p_{B}^{*}(\mathcal{P}'')$ using the cheat
vectors. After around fifteen such iterations, we found that the cheating probabilities converged to approximately $0.81459$.
We also observed that the abort probabilities associated with $\mathcal{P}$
were quite small and therefore one could hope that $\mathcal{Q}$ (which denotes \Algref{BobSelfTests})
fares better. Proceeding analogously and considering $\mathcal{Q}':=C^{L\perp}(\mathcal{Q},\mathcal{Q})$,
$\mathcal{Q}'':=C^{L\perp}(\mathcal{Q},\mathcal{Q}')$, etc., we found that the
cheating probabilities converged to approximately $0.822655$. 

\begin{figure}
\begin{centering}
\includegraphics[scale=1]{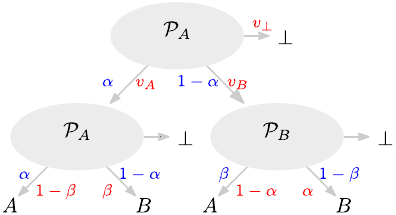}
\par\end{centering}
\caption{Abort phobic composition for weak coin flipping protocols. Subprotocols have three possible outcomes including abort. Aborting in any subprotocol directly leads to aborting the whole protocol. Coloured labels indicate probabilities of outcomes for cheating Alice (blue) and cheating Bob (red). In the security analysis of cheating Bob, we need to optimise over the cheat vectors  $(v_{A},v_{B},v_{\perp})\in\mathbb{C}_{B}$. 
  \label{fig:Abort-Augmented-Composition}}
\end{figure}
}

\begin{thm}\label{thm:Lperp_perpL}
Let $\mathcal{P}$ and $\mathcal{Q}$ denote \Algref{AliceSelfTests} and \Algref{BobSelfTests} respectively. Suppose $X\in\{A,B\}$. Then, {asymptotically in the security parameter (i.e. in the limit that $\lambda\to \infty$),} 
one has
\[
p_{X}^{*}(C^{\perp L}(\mathcal{P}))\le 0.81459
\]
assuming the continuity conjecture, \Conjref{Pcont} (for \Algref{AliceSelfTests}), holds. Further, 
\[
p_{X}^{*}(C^{L\perp}(\mathcal{Q}))\le 0.822655
\] 
 assuming the continuity conjecture, \Conjref{Qcont} (for \Algref{BobSelfTests}), holds.
\end{thm}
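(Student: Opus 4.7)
The plan is to establish both bounds by the same two-step strategy: first, in the asymptotic regime where the continuity conjectures let us replace Alice's (resp. Bob's) self-tested boxes by exactly-GHZ boxes, set up a recursion for the cheating probabilities under repeated abort-phobic composition and prove its convergence to the quoted numerical values; second, turn this asymptotic analysis into a statement parametrised by $\lambda$ by carefully coupling the self-test size $n(\lambda)$ and the number of compositions $k(\lambda)$ via the continuity conjecture.

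For $C^{\perp L}(\mathcal{P})$, I would define $\mathcal{P}^{(1)}:=\mathcal{P}$ and $\mathcal{P}^{(i+1)}:=C^{\perp L}(\mathcal{P},\mathcal{P}^{(i)})$, and write $\alpha^{(i)}:=p_A^*(\mathcal{P}^{(i)}_A)$, $\beta^{(i)}:=p_B^*(\mathcal{P}^{(i)}_A)$, tracking polarity as in the $C^{LL}$ analysis (if polarity flips at some iterate, swap the two expressions, exactly as in the example preceding \Figref{Abort-Augmented-Composition}). Since Bob is lenient in the outer round, a cheating Alice sees no abort branch, so $\alpha^{(i+1)}=\alpha\,\alpha^{(i)}+(1-\alpha)\beta^{(i)}$ by the same derivation as \Eqref{pStarAPprimeA}. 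For a cheating Bob, the key observation is that \Lemref{AliceSelfTests} gives $\mathbb{C}_B(\mathcal{P})$ as the feasible region of an SDP, so
\begin{equation*}
\beta^{(i+1)} \;=\; \max_{(v_A,v_B,v_\perp)\in\mathbb{C}_B(\mathcal{P})}\,\bigl[\,v_B\cdot \alpha^{(i)} + v_A\cdot \beta^{(i)} + v_\perp\cdot 0\,\bigr],
\end{equation*}
which is again a (linear functional of an) SDP, computable to arbitrary precision at each $i$. Monotonicity of the recursion in $(\alpha^{(i)},\beta^{(i)})$ together with the fact that the feasible set $\mathbb{C}_B(\mathcal{P})$ is compact and convex implies the sequences converge; numerically, they settle at a common value near $0.81459$. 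The argument for $C^{L\perp}(\mathcal{Q})$ is completely symmetric, using \Lemref{BobSelfTests} to characterise $\mathbb{C}_A(\mathcal{Q})$ and exploiting that the abort probabilities are larger for $\mathcal{Q}$ (cf.\ \Figref{cheatVectors_ProtocolQ}), yielding convergence to approximately $0.822655$.

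To pass from this idealised $n\to\infty$ analysis to a statement parametrised by a single security parameter $\lambda$, I would adopt the scaling used in \Thmref{CLL}: number of self-test rounds $n=\lambda^3$ and number of compositions $k=\lambda$. For finite $n$, \Conjref{Pcont} (resp.\ \Conjref{Qcont}) asserts that the SDP value characterising $\mathbb{C}_B(\mathcal{P})$ (resp.\ $\mathbb{C}_A(\mathcal{Q}))$ computed under \Assuref{GHZwinwithepsilon} converges to its $\epsilon=0$ value as $\epsilon\to 0$, where $\epsilon=\epsilon(n)$ comes from the robust GHZ self-testing bound $f(\epsilon)$ of \Lemref{rigidityGHZ} combined with the linear-programming estimate discussed in \Subsecref{EstimateGHZ}. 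Plugging this into the recursion replaces $\alpha^{(i+1)},\beta^{(i+1)}$ by perturbed iterates; the cheating probabilities at level $i$ deviate from their asymptotic values by at most $C\cdot i\cdot \delta(n)$ for some explicit $\delta(n)\to 0$, using that the recursion is Lipschitz (the map is a convex combination in the $\alpha$-equation and the SDP value depends continuously on its objective coefficients by compactness). With $n=\lambda^3$ growing polynomially faster than $k=\lambda$, the accumulated error $\lambda\cdot\delta(\lambda^3)$ vanishes as $\lambda\to\infty$, yielding the claimed bounds.

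The main obstacle is the error-accumulation step of the last paragraph: we are composing $k=\lambda$ SDP-based analyses, each of which introduces a continuity-conjecture-dependent slack $\delta(n)$. Showing that this slack does not compound uncontrollably requires a Lipschitz bound on the SDP value as a function of its objective coefficients \emph{and} on the cheat-vector feasible region as a function of $\epsilon$; the former is straightforward, but the latter relies on the quantitative form of the continuity conjectures. Once a uniform-in-$i$ perturbation bound is in hand, the polynomial gap between $n(\lambda)$ and $k(\lambda)$ is more than sufficient to close the argument, and the convergence of the (unperturbed) recursion to the quoted fixed points, which we only verify numerically, completes the proof.
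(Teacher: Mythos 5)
Your proposal follows essentially the same route as the paper: the same recursion ($\alpha^{(i+1)}=\alpha\alpha^{(i)}+(1-\alpha)\beta^{(i)}$ for the lenient side, and an SDP over the cheat vectors $\mathbb{C}_B(\mathcal{P})$ resp.\ $\mathbb{C}_A(\mathcal{Q})$ for the abort-phobic side), the same numerically observed fixed points, and the same $n=\lambda^3$, $k=\lambda$ coupling so that the accumulated per-round slack (which the paper bounds explicitly by $k\eps^{\rm st}=k/(n\epsilon)$ via \Propref{security}) vanishes. Your closing remark that controlling the compounded error strictly requires a quantitative, not merely qualitative, form of \Conjref{Pcont,Qcont} is a fair observation about a subtlety the paper treats only implicitly, but it does not change the argument's structure.
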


\branchcolor{black}{While by itself $\mathcal{Q}$ doesn't seem to help, one can suppress
the bias further, by noting that at the very last step, only the cheating
probabilities $p_{A}^{*}(\mathcal{Q})$ and $p_{B}^{*}(\mathcal{Q})$
played a role (i.e. the fact that the cheating vectors $\mathbb{C}_{A}$
for $\mathcal{Q}$ had an SDP characterisation was not used). Further,
we know that $p_{A}^{*}(\mathcal{P})=p_{A}^{*}(\mathcal{Q})$ but
$p_{B}^{*}(\mathcal{P})<p_{B}^{*}(\mathcal{Q})$, i.e. using $\mathcal{P}$
at the very last step will result in a strictly better protocol. }

\begin{thm}\label{thm:mixed_Lperp}
Let $\mathcal{P}$ and $\mathcal{Q}$ denote \Algref{AliceSelfTests} and \Algref{BobSelfTests} respectively. Suppose the continuity conjectures, \Conjref{Pcont,Qcont}, hold. Let $X\in\{A,B\}$, 
\begin{align*}
\mathcal{Z}^{1} & :=C^{L\perp}(\mathcal{Q},\mathcal{P}),\quad{\rm and}\\
\mathcal{Z}^{k+1} & :=C^{L\perp}(\mathcal{Q},\mathcal{Z}^{k})\quad i>1.
\end{align*} 
Then, {asymptotically in the security parameter (i.e. in the limit that $\lambda\to \infty$),} 
\[
\lim_{k\to\infty}p_{X}^{*}(\mathcal{Z}^{k}) \le 0.791044.
\]
\end{thm}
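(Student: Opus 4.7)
The plan is to reduce the theorem to iterating an explicit recursion on $(\alpha_k,\beta_k):=(p_A^*(\mathcal{Z}^k),p_B^*(\mathcal{Z}^k))$, prove geometric contraction to a common limit, and evaluate that limit numerically via the SDPs supplied by \Lemref{BobSelfTests}. Writing $M_k:=\max(\alpha_k,\beta_k)$ and $m_k:=\min(\alpha_k,\beta_k)$, the two polarisations of any WCF protocol differ only by swapping Alice's and Bob's roles, so $p_A^*(\mathcal{Z}^k_A)=p_B^*(\mathcal{Z}^k_B)=M_k$ and $p_A^*(\mathcal{Z}^k_B)=p_B^*(\mathcal{Z}^k_A)=m_k$.

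The first step is to derive the per-level recursions from the semantics of $C^{L\perp}(\mathcal{Q},\cdot)$. In $\mathcal{Q}^{L\perp}$ honest Alice is lenient (replaces her abort by ``Alice wins''), while honest Bob is not, so honest Bob's abort collapses the whole composite protocol to Alice's loss. Conditioning on the outcome of the first round therefore gives
\begin{align*}
\alpha_{k+1} &\;=\; \max_{(v_A,v_B,v_\perp)\in\mathbb{C}_A(\mathcal{Q})} \bigl(v_A\,M_k + v_B\,m_k\bigr), \\
\beta_{k+1}  &\;=\; p_B^*(\mathcal{Q})\,M_k + \bigl(1-p_B^*(\mathcal{Q})\bigr)\,m_k \;=\; \tfrac34 M_k + \tfrac14 m_k,
\end{align*}
with base case $(\alpha_0,\beta_0)=(p_A^*(\mathcal{P}),p_B^*(\mathcal{P}))\approx(\cos^2(\pi/8),0.667)$ from \Propref{AliceSelfTests}. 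The Alice-line is a constant-size SDP (in the idealised $n\to\infty$ limit) by \Lemref{BobSelfTests}; the Bob-line is an explicit convex combination because Alice's leniency absorbs Bob's $v_\perp$ into $v_A$, leaving only the constraint $v_B\le p_B^*(\mathcal{Q})=3/4$.

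Next I would establish convergence. Since honest play $(1/2,1/2,0)\in\mathbb{C}_A(\mathcal{Q})$, the Alice-line obeys $\alpha_{k+1}\ge m_k$; using $v_A\le p_A^*(\mathcal{Q})=\cos^2(\pi/8)$ and $v_A+v_B\le 1$ yields $\alpha_{k+1}\le m_k+\cos^2(\pi/8)(M_k-m_k)$. Combined with $\beta_{k+1}-m_k=\tfrac34(M_k-m_k)$, both sequences lie in $[m_k,M_k]$, and one obtains the geometric contraction
\[
M_{k+1}-m_{k+1}\;\le\;\max\!\bigl(\cos^2(\pi/8),\tfrac34\bigr)\,(M_k-m_k)\;=\;\cos^2(\pi/8)\,(M_k-m_k),
\]
so $M_k$ decreases, $m_k$ increases, and both converge to a common value $M_*$. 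This value is computed by iterating the recursion numerically (solving the per-step SDP with a standard solver); the monotonicity of $M_k$ certifies an upper bound at every finite $k$, and the iteration yields $M_*\le 0.791044$.

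Finally, I would lift this idealised ($\epsilon=0$) analysis to finite $n$ by invoking \Conjref{Pcont,Qcont}, which give continuity of the cheat-vector sets $\mathbb{C}_A(\mathcal{Q})$, $\mathbb{C}_B(\mathcal{P})$ and of the cheating probabilities as $\epsilon(n)\to 0$. With the security-parameter scaling $n=\lambda^3$, $k=\lambda$ used in \Thmref{CLL,Lperp_perpL}, the per-step $\epsilon$-errors shrink faster than $k$ grows, so the joint limit $\lambda\to\infty$ recovers the idealised fixed point. The main obstacle is controlling this joint limit: each level of the recursion contributes an $O(\epsilon(n))$ error via the SDP's Lipschitz dependence on the defining constraints, and one must verify that these errors summed over $k=\lambda$ levels remain $o(1)$ under $n=\lambda^3$. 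This should follow from standard perturbation arguments once the continuity conjectures are made quantitative.
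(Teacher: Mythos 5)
Your proposal is correct and follows essentially the same route as the paper: the non-lenient side of $C^{L\perp}(\mathcal{Q},\cdot)$ gives cheating Alice the cheat-vector optimisation $\max_{(v_A,v_B,v_\perp)\in\mathbb{C}_A(\mathcal{Q})}(v_A M_k + v_B m_k)$ via \Lemref{BobSelfTests}, the lenient side gives cheating Bob the plain convex combination $\tfrac34 M_k + \tfrac14 m_k$, the recursion is seeded with $(p_A^*(\mathcal{P}),p_B^*(\mathcal{P}))$ at the bottom, and the finite-$\lambda$ errors are absorbed by \Conjref{Pcont,Qcont} together with the $k\eps^{\rm st}\to 0$ scaling $n=\lambda^3$, $k=\lambda$. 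Your explicit monotonicity-and-contraction argument ($M_{k+1}-m_{k+1}\le\cos^2(\pi/8)(M_k-m_k)$) is a modest strengthening of the paper, which only observes convergence numerically after some fifteen iterations.
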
 

{
As was the case for \Thmref{CLL}, in the theorems above, $\lambda$ controls the number of boxes, $n=\lambda^3$, used in the self-test step and the number of compositions $k=\lambda$. Furthermore, the theorems above also use slightly modified definitions for $C^{\perp L}$ and $C^{L \perp}$ as detailed in \Secref{SecurityFiniteN}. 
}

\section{Security proofs | Asymptotic limit}
\label{sec:securityAsymptotic}
The goal of this section is to analyse the security of \Algref{AliceSelfTests} (resp. \Algref{BobSelfTests}) in the simplest setting---assuming that Alice's box (resp. Bob's boxes) indeed correspond to the GHZ state and measurements. %
To be more precise, in this section, we assume \Assuref{GHZwinwithepsilon} with $\epsilon=0$.

\begin{assumption}[Restatement: \Assuref{GHZwinwithepsilon} with $\eps=0$]
\label{assu:asymptotic}In protocol $\mathcal{P}$ (resp. $\mathcal{Q}$),
Alice (resp. Bob) does not perform the box verification step and instead
it is assumed that her box is (resp. his boxes are) taken from a triple of
boxes which win the GHZ game with certainty. 
\end{assumption}

\branchcolor{black}{\Lemref{rigidityGHZ} asserts  that when the
winning probability is exactly one (i.e. $\epsilon=0$ in the lemma), the states and measurements are
the same as the GHZ state and $\sigma_{x},\sigma_{y}$ measurements,
up to local isometries and this allows us to use semi-definite programming. Below, we use the following notation.}
\begin{itemize}
	\item Quantum registers are denoted by capital letters, e.g. $A,B,C$.
	\item A pure quantum state in these registers is denoted as a vector $\ket{\psi} \in ABC$ where $ABC$ is interpreted to be the vector space corresponding to the registers $A,B,C$.
	\item A mixed quantum state is denoted by $\rho \in \Pos(ABC)$ where $\Pos(ABC)$ is the set of all Hermitian matrices on the vector space $ABC$ with non-negative eigenvalues.
\end{itemize}

\subsection{SDP when Alice self-tests | $\mathcal{P}$ \label{subsec:SDP-when-Alice}}

\branchcolor{black}{\begin{proof}[Asymptotic proof of \Lemref{AliceSelfTests}]
We prove \Lemref{AliceSelfTests} under \Assuref{asymptotic}. We
begin by making two observations. 

First, note that in the protocol, if Alice applies an isometry on
her box \emph{after} she has inputted $x$, obtained the outcome $a$
(and has noted it somewhere), the security of the resulting protocol
is unchanged because the rest of the protocol only depends on $x$
and $a$, and Alice's isometry only amounts to relabelling of the
post measurement state. This freedom allows us to simplify the analysis.

Second, in the analysis, we cannot model Alice's random choice, say
for $x$, as a mixed state because Bob can always hold a purification
and thus know $x$. Therefore, we model the randomness using pure
states and measure them in the end.

Notation: Other than $PQR$, all other registers store qubits.

We proceed step by step. 
\begin{enumerate}
\item We can model (justified below) Alice's act of inputting a random $x$
and obtaining an outcome $a$ from her box through the state 
\begin{equation}
\left| \Psi_{0} \right\rangle :=\frac{1}{2}\sum_{x,a\in\{0,1\}}\left|xa\right\rangle _{XA}\left|\Phi(x,a)\right\rangle _{IJ}
\end{equation}
where $X$ represents the random input and $A$ the output. Here,
$\left|\Phi(x,a)\right\rangle _{IJ}$ are Bell states (see \Eqref{bellStates})
and the registers $IJ$ are held by Bob. Alice's act of choosing $r$
at random, computing $s=a\oplus x.r$ is modelled as 
\begin{equation}
\left|\Psi_{1}\right\rangle :=\frac{1}{2\sqrt{2}}\sum_{x,a,r\in\{0,1\}}\left|xa\right\rangle _{XA}\left|\Phi(x,a)\right\rangle _{IJ}\left|r\right\rangle _{R}\left|a\oplus x.r\right\rangle _{S}.\label{eq:Alice_Psi1}
\end{equation}
Finally, Alice's act of sending $s$ is modelled as Alice starting
with the state
\[
\tr_{IJS}\left[\left|\Psi_{1}\right\rangle \left\langle \Psi_{1}\right|\right]\in \Pos(XAR).
\] 

\textbf{Justification for starting with $\left|\Psi_{0}\right\rangle $.}\\
To see why we start with the state $\left|\Psi_{0}\right\rangle $,
model Alice's choice of $x$ as $\left|+\right\rangle _{X}$, suppose
her measurement result is stored in $\left|0\right\rangle _{A}$,
the state of the boxes before measurement is $\left|\psi\right\rangle _{PQR}$
and Alice holds $P$, i.e. 
\[
\left|\Psi_{0}'\right\rangle :=\left|+\right\rangle _{X}\left|0\right\rangle _{A}\left|\psi\right\rangle _{PQR}.
\]
Let $\{M_{a|x}^{P}\}$ be the measurement operators corresponding
to Alice's box. The measurement process is unitarily modelled as 
\[
\left|\Psi_{1}'\right\rangle :=U_{{\rm measure}}\left|\Psi_{0}'\right\rangle =\frac{1}{\sqrt{2}}\sum_{x,a\in\{0,1\}}\left|x\right\rangle _{X}\left|a\right\rangle _{A}M_{a|x}^{P}\left|\psi\right\rangle _{PQR}
\]
where
\[
U_{{\rm measure}}=\sum_{x\in\{0,1\}}\left|x\right\rangle \left\langle x\right|_{X}\otimes\left[\mathbb{I}_{A}\otimes M_{0|x}^{P}+X_{X}\otimes M_{1|x}^{P}\right]\otimes\mathbb{I}_{QR}.
\]
Now we harness the freedom of applying an isometry to the post measured
state (as observed above). We choose the local isometry in \Lemref{rigidityGHZ}.
Without loss of generality, we can assume that Bob had already applied
his part of the isometry before sending the boxes (because he can
always reverse it when it is his turn). We thus have, 
\begin{align*}
\left|\Psi_{2}'\right\rangle :=\Phi_{PQR}\left|\Psi_{1}'\right\rangle  & =\frac{1}{\sqrt{2}}\sum_{x,a\in\{0,1\}}\left|x\right\rangle _{X}\left|a\right\rangle _{A}\Pi_{x|a}^{H}\left|{\rm GHZ}\right\rangle _{HIJ}\otimes\left|{\rm junk}\right\rangle _{PQR}\\
 & =\frac{1}{2}\sum_{x,a\in\{0,1\}}\left|x\right\rangle _{X}\left|a\right\rangle _{A}U^{H}(x,a)\left|0\right\rangle _{H}\left|\Phi(x,a)\right\rangle _{IJ}\otimes\left|{\rm junk}\right\rangle _{PQR}
\end{align*}
where 
\begin{equation}
\left|\Phi(x,a)\right\rangle _{IJ}=\frac{\left|00\right\rangle +(-1)^{a}(i)^{x}\left|11\right\rangle }{\sqrt{2}}\label{eq:bellStates}
\end{equation}
 and $U^{H}(x,a)\left|0\right\rangle _{H}$ is $\frac{\left|0\right\rangle +(-1)^{a}(i)^{x}\left|1\right\rangle }{\sqrt{2}}$.
Since the state of register $H$ is completely determined by registers
$X$ and $A$, we can drop it from the analysis without loss of generality.
Finally, since $\left|{\rm junk}\right\rangle _{PQR}$ is completely
tensored out, we can drop it too without affecting the security. Formally,
we can assume that Alice gives Bob the register $P$ at this point. 

\item Bob sending $g$ is modelled by introducing $\rho_{2}\in \Pos(XARG)$ satisfying
$\tr_{IJS}\left[\left|\Psi_{1}\right\rangle \left\langle \Psi_{1}\right|\right]=\tr_{G}(\rho_{2})$. 

\item At this point, either $x\oplus g$ is zero, in which case Alice's
output is fixed or $x\oplus g$ is one, and in that case Bob will
already know $x$ because he knows $g$ (he sent it) and Alice will
proceed to testing Bob. Formally, therefore, we needn't do anything
at this step.

\item Assuming $x\oplus g=1$, Alice sends $y,z$ to Bob such that $x\oplus y\oplus z=1$.
However, since Bob already knows $x$, he can deduce $z$ from $y$.
We thus only need to model Alice sending $y$ and Bob responding with
$d=b\oplus c$ (because Alice will only use $b\oplus c$ to test the
GHZ game, so it suffices for Bob to send $d$). This amounts to introducing
$\rho_{3}\in \Pos(XARGYD)$ satisfying $\rho_{2}\otimes\frac{\mathbb{I}_{Y}}{2}=\tr_{D}(\rho_{3})$.

\item Since we postponed the measurements to the end, we add this last step.
Alice now measures $\rho_{3}$ to determine $x\oplus g$ and if it
is one, whether the GHZ test passed. Let 
\begin{align}
    \Pi_{i} & :=\sum_{x,y\in\{0,1\}:x\oplus g=i}\left|xg\right\rangle \left\langle xg\right|_{XG}\otimes\mathbb{I}_{ARYD}, \\
    \Pi^{{\rm GHZ}} & :=\sum_{\substack{
            x,y\in\{0,1\},\\a,d\in\{0,1\}:a\oplus d\oplus1=xy\cdot(1\oplus x\oplus y)}
        }
        \left|xyad\right\rangle \left\langle xyad\right|_{XYAD}\otimes\mathbb{I}_{RG}.
    \label{eq:AliceProjs}
\end{align}
Then, we can write the cheat vector for Alice, i.e. the tuple of probabilities
that Alice outputs 0, 1 and abort (see \Defref{CheatVectors}), as
\[
    (\alpha,\beta,\gamma)=(\tr(\Pi_{0}\rho_{3}),\tr(\Pi_{1}\Pi^{{\rm GHZ}}\rho_{3}),\tr(\Pi_{1}\bar{\Pi}^{{\rm GHZ}}\rho_{3}))
\]
 where $\bar{\Pi}:=\mathbb{I}-\Pi$.
\end{enumerate}
To summarise, the final SDP is as follows: let $\left|\Psi_{1}\right\rangle \in \Pos(XAIJRS)$
be as given in \Eqref{Alice_Psi1}, $\rho_{2}\in \Pos(XARG)$ and $\rho_{3}\in \Pos(XARGYD)$
\begin{equation}
    \eta:= \, \max \, \tr([c_{0}\Pi_{0}+\Pi_{1}(c_{1}\Pi^{{\rm GHZ}}+c_{\perp}\bar{\Pi}^{{\rm GHZ}})]\rho_{3}) \label{eq:SDP_AliceSelfTests}
\end{equation}
subject to 
\begin{align*}
    \tr_{IJS}\left[\left|\Psi_{1}\right\rangle \left\langle \Psi_{1}\right|\right] & =\tr_{G}(\rho_{2})\\
    \rho_{2}\otimes\frac{\mathbb{I}_{Y}}{2} & =\tr_{D}(\rho_{3})
\end{align*}
where the projectors are defined in \Eqref{AliceProjs}.
\end{proof}
}

\subsection{SDP when Bob self-tests | $\mathcal{Q}$ \label{subsec:SDP-when-Bob}}
 
\begin{proof}[Asymptotic proof of \Lemref{BobSelfTests}]
Denote by $\mathcal{S}$ the protocol corresponding to \Algref{SCF}. 

It is evident that $p_{B}^{*}(\mathcal{Q})\le p_{B}^{*}(\mathcal{S})$
because compared to $\mathcal{S}$, in $\mathcal{Q}$ Alice performs
an extra test. However, it is not hard to construct a cheating strategy for Bob which lets him saturate the inequality, i.e. $p_{B}^{*}(\mathcal{Q})=p_{B}^{*}(\mathcal{S})$. 

From \Lemref{SCFstandard}, it is also clear that $p_{A}^{*}(\mathcal{Q})=p_{A}^{*}(\mathcal{S})$
because the only difference between Bob's actions in $\mathcal{Q}$
and $\mathcal{S}$ is that Bob self-tests to ensure his boxes are
indeed GHZ. However, the optimal cheating strategy for $\mathcal{S}$
can be implemented using GHZ boxes. 

This establishes the first part of the lemma. For the second part,
i.e. establishing that optimising $c_{0}\alpha+c_{1}\beta+c_{\perp}\gamma$
over $(\alpha,\beta,\gamma)\in\mathbb{C}_{A}$ is an SDP, we proceed
as follows. Suppose \Assuref{asymptotic} holds. Then we can assume
that Bob starts with the state 
\begin{equation}
\rho_{0}:=\tr_{H}(\left|{\rm GHZ}\right\rangle \left\langle {\rm GHZ}\right|_{HIJ})\label{eq:Bob_initState}
\end{equation}
 and the effect of measuring the two boxes can be represented by the
application of projectors of pauli operators $X$ and $Z$.

The justification is similar to that given in the former proof. Suppose
Bob holds registers $QR$ of $\left|\psi\right\rangle _{PQR}$ which
is the combined state of the three boxes. Suppose his measurement
operators are $\{M_{b|y}^{Q},M_{c|z}^{R}\}$. Then using the isometry
in \Lemref{rigidityGHZ}, Bob can relabel his state (and without
loss of generality, we can suppose Alice also relabels according to
the aforementioned isometry) to get $\Phi_{PQR}\left|\psi\right\rangle _{PQR}=\left|{\rm GHZ}\right\rangle _{HIJ}\otimes\left|{\rm junk}\right\rangle _{PQR}$.
Further, since $\Phi_{PQR}(M_{b|y}^{Q}\otimes M_{c|z}^{R}\left|\psi\right\rangle _{PQR})=\Pi_{b|y}^{I}\Pi_{c|z}^{J}\left|{\rm GHZ}\right\rangle _{HIJ}\otimes\left|{\rm junk}\right\rangle _{PQR}$
Bob's act of measurement, in the new labelling, corresponds to simply
measuring the GHZ state in the appropriate Pauli basis. 
\begin{enumerate}
\item Bob receiving $s$ from Alice is modelled by introducing $\rho_{1}\in \Pos(SIJ)$
satisfying $\tr_{S}(\rho_{1})=\rho_{0}$. 
\item Bob sending $g\in_{R}\{0,1\}$ can be seen as appending a mixed state:
$\rho_{1}\otimes\frac{1}{2}\mathbb{I}_{G}$.
\item Alice sending $x$ (and $a$) can be modelled as introducing $\rho_{2}\in \Pos(AXSIJG)$
satisfying $\tr_{A}(\rho_{2})=\rho_{1}\otimes\frac{\mathbb{I}_{G}}{2}$.
\item To model the GHZ test, introduce a register $Y$ in the state $\frac{\left|0\right\rangle _{Y}+\left|1\right\rangle _{Y}}{\sqrt{2}}$.
Recall that to perform the GHZ test, we need $x\oplus y\oplus z=1$
i.e. $z=1\oplus y\oplus x$. Further introduce registers $B$ and
$C$ to hold the measurement results, define 
\begin{equation}
U:=\sum_{y,x\in\{0,1\}}\left|y\right\rangle \left\langle y\right|_{Y}\left|x\right\rangle \left\langle x\right|_{X}\otimes(\mathbb{I}_{B}\otimes\Pi_{0|y}^{I}+X_{B}\otimes\Pi_{1|y}^{I})\otimes(\mathbb{I}_{C}\otimes\Pi_{0|(1\oplus y\oplus x)}^{J}+X_{C}\otimes\Pi_{1|(1\oplus y\oplus x)}^{J})\otimes\mathbb{I}_{ASG}.\label{eq:Bob-u}
\end{equation}
By construction, $\rho_{3}:=U\left(\left|+\right\rangle \left\langle +\right|_{Y}\otimes\left|00\right\rangle \left\langle 00\right|_{BC}\otimes\rho_{2}\right)U^{\dagger}\in \Pos(YBCAXSIJG)$
models the measurement process. 
\item Since we postponed the measurements to the end, we add this step.
Define 
\[
\Pi_{i}:=\sum_{x,g\in\{0,1\}:x\oplus g=i}\left|xg\right\rangle \left\langle xg\right|_{XG}\otimes\mathbb{I}_{YABSIJ}
\]
to determine who won. Define 
\[
\Pi^{{\rm sTest}}:=\sum_{s,a,x\in\{0,1\}:s=a\lor s=a\oplus x}\left|sax\right\rangle \left\langle sax\right|_{SAX}\otimes\mathbb{I}_{GYBCIJ}
\]
 to model the first test, i.e. $s$ should either be $a$ or $a\oplus x$.
Define 
\[
\Pi^{{\rm GHZ}}:=\sum_{\substack{x,y\in\{0,1\},\\
a,b,c\in\{0,1\}:a\oplus b\oplus c\oplus1=xy\cdot(1\oplus x\oplus y)
}
}\left|xyabc\right\rangle \left\langle xyabc\right|_{XYABC}\otimes\mathbb{I}_{GSIJ}
\]
to model the GHZ test. Let 
\begin{equation}
\Pi^{{\rm Test}}:=\Pi^{{\rm GHZ}}\Pi^{{\rm sTest}},\quad\bar{\Pi}^{{\rm Test}}:=\mathbb{I}-\Pi^{{\rm Test}}.\label{eq:BobProjs}
\end{equation}
One can then write the cheat vector for Bob, i.e. the tuple of probabilities
that Bob outputs $0,1$ and abort (see \Defref{CheatVectors}), as
\[
(\alpha,\beta,\gamma)=(\tr(\Pi_{0}\Pi^{{\rm Test}}\rho_{3}),\tr(\Pi_{1}\rho_{3}),\tr(\Pi_{0}\bar{\Pi}^{{\rm Test}}\rho_{3})).
\]
\end{enumerate}
 
To summarise, the final SDP is as follows: let $\rho_{0}\in \Pos(IJ)$ be
as defined in \Eqref{Bob_initState}, $\rho_{1}\in \Pos(SIJ)$ and $\rho_{2}\in \Pos(AXSIJG)$.
Then, 
\begin{equation}
\eta := \max\quad\tr\left([\Pi_{0}(c_{0}\Pi^{{\rm Test}}+c_{\perp}\bar{\Pi}^{{\rm Test}})+c_{1}\Pi_{1}]U\left(\left|+00\right\rangle \left\langle +00\right|_{YBC}\otimes\rho_{2}\right)U^{\dagger}\right)
\label{eq:etaBobSelfTests} \end{equation}
subject to 
\begin{align*}
\tr_{S}(\rho_{1}) & =\rho_{0}\\
\tr_{A}(\rho_{2}) & =\frac{1}{2}\rho_{1}\otimes\mathbb{I}_{G}
\end{align*}
where $U$ is as defined in \Eqref{Bob-u} and the projectors as in
\Eqref{BobProjs}.
\end{proof}

\section{Security Proofs | Finite $\lambda$} 
\label{sec:SecurityFiniteN}

{
In this section, we complete the missing steps from the asymptotic analysis. 
\begin{itemize}
    \item We already derived the cheat vector SDP characterisation for \Algref{AliceSelfTests} in \Lemref{AliceSelfTests} (resp. for \Algref{BobSelfTests} in \Lemref{BobSelfTests}) assuming the boxes win GHZ with certainty, i.e. \Assuref{asymptotic}, instead of assuming the boxes win with probability $1-\epsilon$, i.e. \Assuref{GHZwinwithepsilon}. \Subsecref{SDP-when-Alice-N} (resp. \Subsecref{SDP-when-Bob-N}) precisely states the conjecture that using \Assuref{GHZwinwithepsilon}, the security analysis of \Algref{AliceSelfTests} (resp. \Algref{BobSelfTests}) converges to that done using \Assuref{asymptotic}.
    \item \Subsecref{EstimateGHZ}, shows how to use the step where $n-1$ devices are tested, to draw the conclusion that the remaining device wins the GHZ game with probability $1-\epsilon$, justifying \Assuref{GHZwinwithepsilon}. The analysis here holds quite generally for any game with perfect completeness and may be of independent interest. 
    \item Finally, \Subsecref{assembling}, uses this link, \Lemref{AliceSelfTests} (resp. \Lemref{BobSelfTests}), and \Conjref{Pcont} (resp. \Conjref{Qcont}) to prove the security of \Algref{AliceSelfTests} as stated in \Propref{AliceSelfTests}. %
	It ends by completing the remaining steps in \Secref{Second-Technique} to establish the proofs of security for the composed protocols, as stated in \Thmref{CLL, Lperp_perpL, mixed_Lperp}.
\end{itemize}
} 
 
\subsection{Continuity conjecture when Alice self-tests | $\mathcal{P}$}
\label{subsec:SDP-when-Alice-N}

\begin{conjecture}[Continuity conjecture for $\mathcal{P}$]\label{conj:Pcont} Let $c_0,c_1,c_{\perp}$ be non-negative. Denote by $v_0,v_1,v_{\perp}$ the probability that Alice outputs $0,1,\perp$ respectively when \Algref{AliceSelfTests} is executed against some cheating strategy of Bob. Let $0\le 1-\eps'(N) \le 1$ denote the winning probability of the GHZ boxes held by Alice {(via \Assuref{GHZwinwithepsilon}).} 
	Let $\epsilon := f(\epsilon')$ where $f$ is as in \Lemref{rigidityGHZ}. {Consider the objective $\eta_{\epsilon}=\max c_0v_0 + c_1v_1 + c_{\perp}v_{\perp}$ where the maximisation is over $v_0,v_1,v_\perp$. Then, $\lim_{\epsilon \to 0} \eta_{\eps} = \eta$ where $\eta$ is the value of the SDP in \Eqref{SDP_AliceSelfTests}.} 
\end{conjecture}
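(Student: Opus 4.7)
The plan is to establish continuity by bounding how much Alice's $\epsilon$-perturbed protocol deviates from the ideal protocol in diamond norm, and then propagating this to the cheating value. The central tool is the robust self-testing guarantee, \Lemref{rigidityGHZ}.

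First, without loss of generality, assume both parties have applied the local isometry $\Phi=\Phi^A\otimes\Phi^B\otimes\Phi^C$ from \Lemref{rigidityGHZ} at the start of the protocol. This is valid because a local isometry on Bob's side can be absorbed into his cheating strategy by precomposing with its inverse, while Alice is passive on the parts she will send to Bob. After this reduction, on the image registers one has: (i) the joint box-state is within trace norm $f(\epsilon')$ of $|{\rm GHZ}\rangle_{HIJ}\otimes|{\rm junk}\rangle_{PQR}$, and (ii) for every $a,x$, the vector $\Phi(M_{a|x}^A|\psi\rangle)$ is within trace norm $f(\epsilon')$ of $\Pi_{a|x}^A|{\rm GHZ}\rangle\otimes|{\rm junk}\rangle$. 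Using Uhlmann's theorem, these norm bounds lift to the setting in which Bob holds an arbitrary-dimensional purifying register entangled with $IJ$, with no dependence on its dimension.

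Next, model Alice's honest actions in $\mathcal{P}$ (measuring her box, exchanging $s$, $g$, $x$, $y$, and running the GHZ verification on $b\oplus c$) as a CPTP instrument $\Lambda^{\rm real}_\epsilon$ with classical output in $\{0,1,\perp\}$, taking as input Bob's initial cheating state. Let $\Lambda^{\rm ideal}$ be the analogous instrument when Alice uses the ideal GHZ state and Pauli measurements on register $H$. By the triangle inequality across Alice's constant-round operations, one obtains $\|\Lambda^{\rm real}_\epsilon-\Lambda^{\rm ideal}\|_\diamond\le g(\epsilon)$ with $g(\epsilon)\to 0$ as $\epsilon\to 0$. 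Since $c_0 v_0+c_1 v_1+c_\perp v_\perp$ is linear in the classical output distribution of $\Lambda$ applied to Bob's strategy, this yields $\eta_\epsilon\le\eta+(|c_0|+|c_1|+|c_\perp|)\,g(\epsilon)$, where $\eta$ is the value of the SDP in \Eqref{SDP_AliceSelfTests}; note that any feasible Bob for the ideal instrument corresponds, via Stinespring dilation, to a pair $(\rho_2,\rho_3)$ satisfying the SDP's marginal constraints, and conversely. The reverse inequality $\eta\le\eta_\epsilon$ is immediate because ideal boxes satisfy the $\epsilon$-winning constraint, so any strategy feasible for the ideal SDP is also feasible at level $\epsilon$. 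Combining these gives $\lim_{\epsilon\to 0}\eta_\epsilon=\eta$.

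The main obstacle will be the diamond-norm bound above. The complication is that the junk register $P$ produced by the self-testing isometry has \emph{a priori} unbounded dimension and may be entangled with Bob's memory. The cleanest approach is to treat $P$ as part of Bob's ancilla that Alice never accesses, so that $\Lambda^{\rm ideal}$ factors through the action on $H$ alone; the $f(\epsilon')$-closeness on $H$ then propagates through Alice's purely classical post-processing with only a small linear blow-up. A technical subtlety is ensuring that the Uhlmann extensions used in each round of Alice's protocol are chosen consistently, since \Lemref{rigidityGHZ} controls each measurement vector separately on a fixed starting state. If a quantitative rate turns out to be technically hard, one should be able to fall back on a weak-compactness argument on the space of Bob's strategies to recover the qualitative statement $\lim_{\epsilon\to 0}\eta_\epsilon=\eta$, by extracting a subsequential limit of $\epsilon_n$-cheating strategies and invoking continuity of the linear objective at the limit.
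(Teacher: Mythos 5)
The first thing to say is that the paper does not prove this statement: it is explicitly a \emph{conjecture}, and the authors write that they ``leave the proofs of these conjectures to future work.'' So there is no proof in the paper to compare yours against; what you have written is a candidate proof of an open statement, and it has to stand on its own. As a sketch it follows the route the authors themselves gesture at (apply the self-testing isometry, argue the perturbed protocol is close to the ideal one, transfer the closeness to the linear objective), but the two places you flag as ``technical subtleties'' are precisely where the difficulty of the conjecture lives, and as written they are gaps rather than steps.

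Concretely: (i) \Lemref{rigidityGHZ} is stated for a pure tripartite state $\ket{\psi}\in\mathcal H^{ABC}$ with no external reference system, whereas in the cheating scenario Bob holds a memory register $E$ entangled with the boxes, and the $1-\epsilon'$ winning probability only constrains the reduced state on $ABC$. Invoking ``Uhlmann's theorem'' does not by itself upgrade the vector-norm bounds of \Eqref{continuity_GHZ} to bounds that hold jointly with $E$ and uniformly in $\dim E$; the junk state can be entangled with $E$ in an $\epsilon$-dependent way, and this is exactly the ``arbitrary dimensions that could scale with $\epsilon$'' obstacle the paper identifies. (ii) Your diamond-norm bound $\|\Lambda^{\rm real}_\epsilon-\Lambda^{\rm ideal}\|_\diamond\le g(\epsilon)$ needs closeness of Alice's measurement \emph{operators}, but the lemma only controls $\Phi(M_{a|x}^A\ket{\psi})$ on the honest box state; a state-dependent bound may suffice since Alice measures her box once, but then you must also track the post-measurement state on $IJE$ seen by cheating Bob, which again requires the extension-register version of the self-test. (iii) The compactness fallback is not free: Bob's strategies range over Hilbert spaces of unbounded dimension, so there is no obvious (weak-\(\ast\)) compact set of strategies from which to extract a convergent subsequence without first proving a dimension-reduction or de Finetti-type statement. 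None of these is obviously fatal, and the authors themselves say they see no fundamental obstruction for $\mathcal P$, but each needs an actual argument before this can be called a proof.
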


{\Conjref{Pcont} and the asymptotic analysis in \Subsecref{SDP-when-Alice} immediately yield \Lemref{AliceSelfTests}.}

\subsection{Continuity conjecture when Bob self-tests | $\mathcal{Q}$\label{subsec:SDP-when-Bob-N}}

The analogous statement for the case where Bob self-tests is the following.

\begin{conjecture}[Continuity conjecture for $\mathcal{Q}$]\label{conj:Qcont} Let $c_0,c_1,c_{\perp}$ be non-negative. Denote by $v_0,v_1,v_{\perp}$ the probability that Bob outputs $0,1,\perp$ respectively when \Algref{BobSelfTests} is executed against some cheating strategy of Alice. Let $0\le 1-\eps' \le 1$ denote the winning probability of the GHZ boxes held by Bob {(via \Assuref{GHZwinwithepsilon}).} 
Let $\epsilon := f(\epsilon')$ where $f$ is as in \Lemref{rigidityGHZ}. {Consider the objective $\eta_{\epsilon}=\max c_0v_0 + c_1v_1 + c_{\perp}v_{\perp}$ where the maximisation is over $v_0,v_1,v_\perp$. Then, $\lim_{\epsilon \to 0} \eta_{\eps} = \eta$ where $\eta$ is the value of the SDP in \Eqref{etaBobSelfTests}.} 
\end{conjecture}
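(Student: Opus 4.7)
The plan is to establish $\lim_{\epsilon \to 0} \eta_\epsilon = \eta$ via two inequalities. The easy direction, $\eta_\epsilon \geq \eta - o(1)$, follows by replaying any feasible SDP strategy: the optimum of \eqref{eq:etaBobSelfTests} specifies a cheating strategy for Alice against exactly-GHZ boxes, and executing the same quantum operations against $\epsilon'$-perturbed boxes changes the objective by at most $O(f(\epsilon'))$ since the density operators and measurement operators appear to bounded norm in a bilinear expression. The main content of the conjecture is therefore the upper bound $\eta_\epsilon \leq \eta + o(1)$.

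For the upper bound, I would fix an arbitrary cheating strategy for Alice against boxes winning GHZ with probability $1-\epsilon'$. By \Lemref{rigidityGHZ}, there exists a local isometry $\Phi = \Phi^A \otimes \Phi^B \otimes \Phi^C$ bringing the tripartite box state within $f(\epsilon')$ of $\ket{\mathrm{GHZ}} \otimes \ket{\mathrm{junk}}$, and matching each measurement operator to the ideal Pauli projector on the GHZ subsystem up to the same precision when applied to the initial state. Since a local isometry on the honest party's side does not change the security analysis, I would absorb $\Phi^B \otimes \Phi^C$ into Bob's preprocessing and $\Phi^A$ into Alice's cheating operations on her box $\rom{1}_i$. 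After this reduction, Bob's registers split as $(\text{ideal GHZ qubits}) \otimes (\text{junk})$, his measurement operators act within $O(f(\epsilon'))$ of the ideal Pauli projectors on the GHZ subsystem, and they leave the junk register untouched.

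The next step is a hybrid argument over the constant number of measurements Bob performs in $\mathcal{W}$. Since all subsequent communication is classical and Alice's quantum operations act only on her own registers, they commute with Bob's measurement operators and with the difference $\Phi^B M^B (\Phi^B)^\dagger - \Pi^B \otimes I_{\mathrm{junk}}$ on Bob's side. Together with the state closeness $\|\Phi\ket{\psi} - \ket{\mathrm{GHZ}}\otimes\ket{\mathrm{junk}}\| \leq f(\epsilon')$ and the measurement closeness from \Lemref{rigidityGHZ}, this should propagate the initial-state self-testing guarantee to the state at the time of each measurement. Replacing each of Bob's measurements by its ideal counterpart changes each of $v_0, v_1, v_\perp$ by $O(f(\epsilon'))$. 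The resulting strategy, after tracing out the untouched junk, is feasible for \eqref{eq:etaBobSelfTests} and therefore achieves value at most $\eta$, yielding $\eta_\epsilon \leq \eta + O(f(\epsilon'))$.

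The main obstacle---precisely the one the authors flag---lies in the previous paragraph. Unlike Protocol~$\mathcal{P}$ where Alice measures her box right after self-testing, in $\mathcal{Q}$ Bob's measurements occur only after extensive classical interaction with Alice, whose messages may be conditioned on measurements of her share of the boxes and on her external cheating memory of unbounded dimension. \Lemref{rigidityGHZ} bounds only the action of the measurement operators on the specific initial state $\ket{\psi}$, not an operator-norm difference, so the commutation-plus-hybrid step must exploit the fact that Alice's operations act only on her registers in order to transport the state-specific bound to the post-interaction state. Making this transport dimension-independent---so that the junk register and Alice's unbounded cheating memory do not conspire to amplify the error when $\Phi^A$ is absorbed into Alice's strategy---is the crux of the conjecture and is where I expect the technical work to concentrate.
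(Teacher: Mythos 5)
This statement is a \emph{conjecture} in the paper: the authors explicitly leave its proof to future work and remark that the case of Protocol~$\mathcal{Q}$ ``seems more involved as the box held by Bob is not measured right after the self-test step, but only later---after some interaction has taken place between the two parties,'' which makes it unclear how to relax the cheating optimisation to an SDP in dimensions independent of $\epsilon$. So there is no paper proof to compare against, and your proposal should be judged on whether it actually closes the conjecture. It does not. Your first two paragraphs are a reasonable setup (the lower bound $\eta_\epsilon \ge \eta$ is essentially immediate since the adversary may simply supply exact GHZ boxes, and absorbing the local isometries into the two parties is standard), but the entire burden of the upper bound sits in your third paragraph, where the ``hybrid argument'' is asserted rather than carried out. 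Lemma~\ref{lem:rigidityGHZ} bounds $\|\Phi(M^D_{d|t}\ket{\psi}) - \Pi^D_{d|t}\ket{\mathrm{GHZ}}\otimes\ket{\mathrm{junk}}\|$ only for single measurement operators applied to the \emph{initial} state $\ket{\psi}$; in Protocol~$\mathcal{Q}$ Bob applies \emph{two} measurement operators in sequence (boxes $\rom{2}_i$ and $\rom{3}_i$), with inputs chosen adaptively from a transcript that a cheating Alice has influenced using operations on her box and on an external memory of unbounded dimension. Transporting the state-specific, single-operator bound to products of operators acting on the post-interaction state, with an error that does not depend on the dimension of Alice's memory or the junk register, is precisely the open technical content of the conjecture --- and you concede as much in your final paragraph (``is the crux of the conjecture and is where I expect the technical work to concentrate'').

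In short: what you have written is a plausible proof strategy together with an accurate diagnosis of why it is hard, not a proof. To turn it into one you would need, at minimum, (i) a version of the GHZ rigidity statement that controls products of Bob's measurement operators (or a sequential/adaptive self-testing statement), and (ii) a rigorous argument that the operators $\Phi^B M^B (\Phi^B)^\dagger - \Pi^B\otimes I_{\mathrm{junk}}$, which are only known to be small \emph{on} $\ket{\psi}$, remain small on the branch states produced by Alice's arbitrary Kraus operators --- the commutation observation gives this for a single operator and a single branch, but you must control the accumulation over both measurement rounds and over all of Alice's adaptive branches simultaneously, and then show the resulting ideal-box strategy is feasible for the SDP in \eqref{eq:etaBobSelfTests}. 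Until those steps are supplied, the conjecture remains open.
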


{\Conjref{Qcont} and the asymptotic analysis in \Subsecref{SDP-when-Bob} immediately yield \Lemref{BobSelfTests}.}

\subsection{Estimation of GHZ winning probability}
\label{subsec:EstimateGHZ}

We restate the self-testing procedure using notation tailored to the analysis here. We assume that the $3n$ boxes are described by some joint quantum state and local measurement operators. After playing the GHZ game with $3(n-1)$ of them, and verifying that they all pass, we want to make a statement about the remaining box, whose state $\tilde \rho$ is conditioned on the passing of all the other test. 

\begin{varalgorithm}{Est-GHZ} 
 \caption{\label{alg:self-test}}
 	\begin{enumerate}
		\item Pick a box $J \in [ n ]$ uniformly at random.
		\item For $i \in [n]\backslash J$, play the GHZ game with box $i$, denote outcome of game as $X_i\in \{0,1\}$ 
		\item If 
		\begin{IEEEeqnarray}{r'L}\label{eq:Omega}
		     \Omega:& X_i = 1 \text{, for all } i\in [n] \backslash J
		\end{IEEEeqnarray}
		\item Then conclude that the remaining box satisfies
		\begin{IEEEeqnarray}{r'L}\label{eq:T}
		     T:& E[X_J|J,\Omega] \geq 1 - \epsilon 
		\end{IEEEeqnarray}
	\end{enumerate}
\end{varalgorithm}
	
We remark that the expectation value of $E[X_J|J,\Omega]$ accurately describes the expected GHZ value associated to the state of the remaining boxes $J$, conditioned on having measuring some outcome sequence in the other boxes which passes all the GHZ tests. Note that the conditioning in $J$ is important because otherwise we would get a bound on the GHZ averaged over all boxes, but we are only interested in the remaining box.

\begin{prop}[Security of \Algref{self-test}]
	\label{prop:security}
	For any implementation of the boxes and choice of $\epsilon>0$ the joint probability that that the test $\Omega$ passes and that the conclusion $T$ is false is small, more precisely, $\Pr[ \Omega \cap \bar T] \leq \frac{1}{1-\epsilon + n\epsilon} \leq \frac{1}{n\epsilon}$, where the first upper-bound is tight.
\end{prop}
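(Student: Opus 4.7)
The plan is to reduce the quantum security analysis of \Algref{self-test} to a finite-dimensional classical linear program and then to solve that LP in closed form.

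First I would pass to the thought experiment in which one also plays the GHZ game on the remaining box $J$, obtaining a hypothetical outcome $Y_J\in\{0,1\}$. Since the adversary's state together with the GHZ measurements induces a joint distribution on $(Y_1,\ldots,Y_n)\in\{0,1\}^n$, the entire analysis of $\Omega$ and of $E[X_J\mid J,\Omega]$ can be carried out in terms of that distribution. Crucially, $\Omega$ and the conditional expectation only depend on outcomes of Hamming weight at least $n-1$, so I would introduce the non-negative parameters
\begin{align*}
a &:= \Pr[Y_i = 1\ \forall i], \\
b_j &:= \Pr[Y_j=0 \text{ and } Y_i = 1\ \forall i\neq j], \quad j\in[n],
\end{align*}
which satisfy $a+\sum_j b_j\le 1$. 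A direct computation then yields $\Pr[\Omega\mid J=j] = a+b_j$ and $E[X_J\mid J=j,\Omega] = a/(a+b_j)$, so $\bar T\cap\{J=j\}$ is exactly the event $b_j > \tfrac{\epsilon}{1-\epsilon}\,a$.

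Next I would maximise
\[
\Pr[\Omega\cap\bar T] \;=\; \frac{1}{n}\sum_{j\in S}(a+b_j), \qquad S \;=\; \bigl\{j : b_j > \tfrac{\epsilon}{1-\epsilon}\,a\bigr\},
\]
over $(a,b_1,\ldots,b_n)$ satisfying $a,b_j\ge 0$ and $a+\sum_j b_j\le 1$. The quantum nature of the strategy has disappeared: any such tuple is feasible in the LP (giving an upper bound on all quantum strategies), and conversely any classical shared-randomness strategy (a special case of a quantum strategy) realises any feasible tuple, so the LP bound is tight. For a fixed value $k=|S|$, symmetry lets me equalise the active $b_j$'s, reducing the objective to $F(a,k)=(ka+1-a)/n$ subject to $a\le (1-\epsilon)/(1-\epsilon+k\epsilon)$. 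Substituting the tight $a$ gives $F = k/(n(1-\epsilon+k\epsilon))$, which is monotonically increasing in $k$, so the supremum is attained at $k=n$ and equals $1/(1-\epsilon+n\epsilon)$.

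Finally, for tightness I would exhibit a matching classical strategy approaching the bound: pick a ``bad index'' $i^\star\in\{0,1,\ldots,n\}$ via shared randomness with $\Pr[i^\star=0]$ close to $(1-\epsilon)/(1-\epsilon+n\epsilon)$ and $\Pr[i^\star=j]$ slightly above $\epsilon/(1-\epsilon+n\epsilon)$ for each $j\in[n]$; if $i^\star=0$ run the honest GHZ strategy on every box, otherwise play honestly on every box except $i^\star$, which is set to deterministically fail. The weaker second inequality $1/(1-\epsilon+n\epsilon)\le 1/(n\epsilon)$ is immediate. The only real conceptual obstacle is recognising that after conditioning on $\Omega$ the joint distribution collapses, for the purposes of the objective, to the $n+1$ ``almost all ones'' parameters $a$ and $(b_j)$; once this reduction is in place the rest is an elementary LP computation. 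A minor wrinkle is that $S$ is defined by a strict inequality, so tightness holds in the sense of supremum rather than maximum.
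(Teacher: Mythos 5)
Your proposal is correct and follows essentially the same route as the paper: reduce to the joint outcome distribution restricted to the ``all ones'' and ``all ones except $j$'' events, express $\Pr[\Omega\cap\bar T]$ as $\tfrac{1}{n}\sum_{j\in S}(a+b_j)$, and solve the resulting linear program to get $1/(1-\epsilon+n\epsilon)$. Your treatment is in fact slightly more explicit on two points the paper glosses over---the counterfactual outcome $Y_J$ on the unmeasured box and the concrete strategy (honest GHZ boxes with a randomly chosen deterministically failing box) witnessing tightness.
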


This is the correct form of the security statement. It is important to bound the joint distribution of $\Omega$ and $\overline T$, and not $\Pr[\overline T|\Omega]$, conditioning on passing the test $\Omega$. Indeed in the latter case, it would not be possible to conclude anything of value about the remaining box $J$, as there could be some implementation of the boxes which has a very low expectation value of GHZ, but which passes the test with small but non-zero probability. 
Consider a hypothetical ideal protocol, which after having chosen $J$, only passes when $T$ is true. In that case, $\Pr[\Omega\cap \overline T] = 0$. Then the actual protocol is equivalent the ideal one, except that it fails with probability $\delta = \frac{1}{1-\epsilon + m\epsilon}$, and so it is $\delta$-close to the ideal algorithm. 

\begin{proof}
	For a given implementation of the boxes, let $p(x_1,\cdots x_n)$ denote the joint probability distribution of passing the GHZ games. Let $S = \{j|\ E[X_j|J=j, \Omega] < 1-\epsilon\} \subset [n]$ be the set of boxes that have an expectation value for GHZ (conditioned on passing in the other boxes) below our target threshold and let $m = |S|$ be the number of such boxes. The value of $m$ is unknown, so we will need to maximise over it in the end.
	
	Let $\alpha = \Pr(\{X_i\}_i = 1)$ and $\beta_j = \Pr(\{X_i\}_{i\neq j} = 1 \cap X_j = 0)$ be respectively the probabilities of the events where all the tests pass, or they all pass except for the $j$th test. This allows us to rewrite $E[X_j|J=j,\Omega] = \Pr(\{X_i\}_i=1)/\Pr(\{X_i\}_{i\neq j} = 1) = \alpha/(\alpha + \beta_j)$, and so, by definition of $S$, we have $ \alpha/(\alpha + \beta_j)<(1-\epsilon)$, for $j\in S$, which is equivalent to $\beta_j > \frac{\epsilon}{1-\epsilon} \alpha$. 
	
	The aim of the proof is to bound the probability $\Pr[\Omega \cap \overline T]$. If we condition and summed over the different values of $J$, we can rewrite it as
    \begin{IEEEeqnarray}{rL}
         \Pr(\Omega \cap \overline T) = \sum_j \frac{1}{n} \Pr(\Omega \cap \overline{T}| J = j) = \sum_{j\in S} \frac{1}{n} \Pr(\{X_i\}_{i\neq j} = 1) = \frac{1}{n} \sum_{j\in S} (\alpha + \beta_i)\,,
    \end{IEEEeqnarray}
    where we have kept the round $j\in S$ ones, conditioned on which $T$ is false. 
    We are thus left with the optimisation problem 
	\begin{IEEEeqnarray}{L'L}
		\max_{\alpha\geq 0,(\beta_i)_i\geq 0} 	
		    &\frac{1}{n}\left( \sum_{j\in S} \alpha + \beta_j\right)\\
		\mathrm{subject\ to}					
		    &\alpha + \sum_{j\in S} \beta_j \leq 1\\
			& \beta_j \geq \frac{\epsilon}{1-\epsilon} \alpha \text{, for }j\in S		
	\end{IEEEeqnarray}	    
	This is a linear problem. Simplifying it by defining $\Sigma = \sum_{j\in S} \beta_j$, gives
	\begin{IEEEeqnarray}{L'L}
		\max_{\alpha\geq0,\Sigma\geq0} 	&\frac{1}{n}( m\alpha + \Sigma)\\
		\mathrm{subject\ to}					
		    &\alpha + \Sigma \leq 1\\
			& \Sigma \geq m\frac{\epsilon}{1-\epsilon} \alpha			
	\end{IEEEeqnarray}
	It is easily shown that the maximum is attained for $(\alpha,\Sigma) = \left(\frac{1-\epsilon}{1-\epsilon + m\epsilon}, \frac{m \epsilon}{1-\epsilon + m\epsilon}\right)$ which gives the upper-bound
	\begin{IEEEeqnarray}{rL}
		\Pr[\Omega\cap \overline T] \leq \frac{1}{n} \max_m \frac{m}{1-\epsilon + m\epsilon} = \frac{1}{1-\epsilon + n\epsilon}
	\end{IEEEeqnarray}
	We note that the upper-bound is an increasing function of $m$ and so the maximum is attained for $m=n$. This yield the desired upper-bound. From the converse statement, we note that from the present proof we can construct a probability distribution $p(x_1,\cdots x_n)$, which saturates all inequalities, and so the upper-bound $\frac{1}{1-\epsilon + n\epsilon}$ is tight.
\end{proof}

\subsection{Putting everything together} \label{subsec:assembling}

{Using \Lemref{AliceSelfTests} (analysis of \Algref{AliceSelfTests} under \Assuref{GHZwinwithepsilon}) and \Propref{security} (in \Subsecref{EstimateGHZ} above), we show how to prove the security of \Algref{AliceSelfTests}, i.e. \Propref{AliceSelfTests}. } %

\vspace{1em}

{
Notation:
\begin{itemize}
    \item From the previous subsection, recall the definition of event $\Omega$ (see \Eqref{Omega}) which indicated whether the self-test step cleared, and that of $T$ (see \Eqref{T}) which indicated whether the last device (hypothetically) clears the GHZ test with probability at least $1-\epsilon$. 
    \item Denote by $\epsst:=\Pr(\Omega \bar T)$ (which, using \Propref{security} we will bound by $1/n\epsilon$).
    \item Denote by $p^{*\epsilon}_{A}$ (resp. $p^{*\epsilon}_B$) the probability that Alice (resp. Bob) wins when \Algref{AliceSelfTests} is executed under \Assuref{GHZwinwithepsilon}. 
\end{itemize}
}

\begin{proof}[Proof of \Propref{AliceSelfTests}]

    {
    We start by considering \Algref{AliceSelfTests} and label it as in \Figref{AliceSelfTests_}. It consists of two parts $\Pst$ and $\Peps$. 
    \begin{itemize}
        \item $\Pst$ outputs either ``abort'' or ``continue'' which are denoted by $\bar{\Omega}$ and $\Omega$ respectively.
        \item $\Peps$ is only executed on event $\Omega$ and it outputs either $A, B$ or $\perp$. 
    \end{itemize} 
    Note that, $\lim_{\epsilon\to0} p^{*\epsilon}_B$ can be calculated using \Lemref{AliceSelfTests} and $p^{*\epsilon}_A=p^*_A(\cal{W})$ since the behaviour of honest Bob is identical for the two protocols. Note also that, essentially by definition, $p^*_{A/B}(\Peps|T) = p^{*\epsilon}_{A/B}$. }

    {
    Now, $p^*_A(\calP) \le p^{*\epsilon}_A$ because Bob is honest and so $T$ is always true. Further, a cheating Alice can only decrease her probability of success by having Bob output $\bar \Omega$.}

    {
    As for $p^*_B(\calP)$, one can write 
    \begin{align*}
        p^*_B(\calP) &= p^*_B(\calP | \Omega) \Pr(\Omega) + p^*_B(\calP | \bar \Omega) \Pr(\bar \Omega) \\
        &\le p^*_B(\calP | \Omega T) + \Pr(\Omega \bar T) \\
        &\le p^{*\eps}_B + \eps^{\rm st}
    \end{align*}
    where in the first equality, note that $p^*_B(\calP | \bar \Omega)=0$ and from \Propref{security}, $\eps^{\rm st} = 1/(n\epsilon)$ bounds the probability that the self-test passed but the conclusion $T$ the device was wrong. Choosing $\epsilon=1/\lambda$ and $n=\lambda^3$, and taking the limit $\lambda \to \infty$, we obtain the bound asserted in \Propref{AliceSelfTests}. 
    }
    
    \begin{figure}[H]
        \centering
        \includegraphics[scale=1.0]{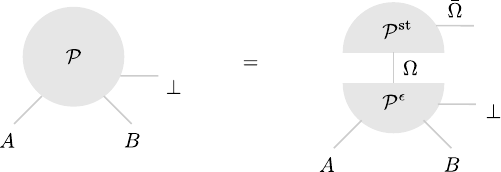}
        \caption{\Algref{AliceSelfTests} broken into two phases: the self-testing phase and the remaining protocol. Above, $\Omega$ denotes the outcome that the self-test phase succeeded and $\bar \Omega$ that it failed. $\Peps$ denotes continuation of the remaining protocol to output $A,B$ or $\perp$ (abort).} 
        \label{fig:AliceSelfTests_}
    \end{figure}

\end{proof}

{
We now show how to compose protocols involving the self-test step and conclude the proofs of \Thmref{CLL, Lperp_perpL, mixed_Lperp}. }

\begin{proof}[Proof of \Thmref{CLL, Lperp_perpL, mixed_Lperp}]
{
We start by revisiting lenient compositions. Suppose we are given a protocol $\calZ$ with $p^*_A(\calZ)=\alpha$ and $p^*_B(\calZ)=\beta$, with $\alpha>\beta$ as in \Figref{protocolZ}. Define the lenient composition of $\calZ$ with $\calP$, $C^{LL}(\calP,\calZ)$ as shown in \Figref{LenientComposition}. We have 
\begin{align*}
    p^*_A(C^{LL}(\calP,\calZ) &\le p^{*\eps}_A \cdot \alpha + (1-p^{*\eps}_A) \cdot \beta \\
    p^*_B(C^{LL}(\calP,\calZ) &\le (p^{*\eps}_B\cdot \alpha + (1-p^{*\eps}_B)\cdot \beta) \cdot \Pr(\Omega T) + \Pr(\Omega \bar T) \\
    & \le p^{*\eps}_B\cdot \alpha + (1-p^{*\eps}_B)\cdot \beta + \eps^{\text{st}}
\end{align*}
where the analysis for malicious Alice is the same as before and the analysis for malicious Bob follows by simply writing out the two possibilities that could lead Alice to output $B$, to wit: event $\Omega$ which can be split into two conditioned on $T$ as $\Omega T$ and $\Omega \bar T$. In both cases, one can bound the probability that Alice outputs $B$ by using \Lemref{AliceSelfTests} and \Propref{security}. 
}

{
One can similarly analyse the abort phobic composition $C^{\perp L}(\calP,\calZ)$ as shown in \Figref{AbortPhobicComposition}. Reasoning as above, one can write 
\begin{align*}
    p^*_A(C^{\perp L}(\calP,\calZ)) &\le p^{*\eps}_A \alpha + (1 - p^{*\eps}_A) \beta \\
    p^*_B(C^{\perp L}(\calP,\calZ)) &\le \max (v_A \cdot \beta + v_B \cdot \alpha) + \eps^{\rm st}
\end{align*}
where the maximisation is over cheat vectors of $\Peps$, i.e. $(v_A,v_B,v_{\perp})\in\mathbb{C}_B(\Peps)$, and we again used \Lemref{AliceSelfTests} and \Propref{security}.
}

{
Clearly, iterating this procedure, in both cases, results in an additional $\eps^{\rm st}$ term each time to the final cheating probabilities. Thus, for $k$ compositions, one must add $k \eps^{\rm st}$. For $k=\lambda$, $\epsilon=1/\lambda$ and $n=\lambda^3$, one obtains $\lim_{\lambda \to \infty} k \eps^{\rm st} = 0$ and the compositions converge to the asymptotic case that we already analysed in \Secref{Second-Technique}. This completes the proof of \Thmref{CLL, Lperp_perpL, mixed_Lperp}.
}

    \begin{figure}[H]
        \centering
        \includegraphics[scale=1.0]{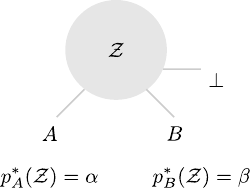}
        \caption{Suppose for protocol $\calZ$, one knows $p^*_A(\calZ)=\alpha$ and $p^*_B(\calZ)=\beta$ where $\alpha>\beta$.} 
        \label{fig:protocolZ}
    \end{figure}

    \begin{figure}[H]
        \centering
        \includegraphics[scale=1.0]{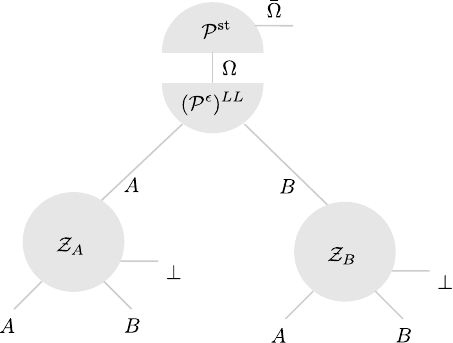}
        \caption{The lenient composition $C^{LL}(\calP,\calZ)$ is shown above. Even though the output $\bar \Omega$ is still an abort, the composition is lenient because $(\Peps)^{LL}$ means that in $\Peps$, the honest player replaces $\perp$ by declaring themselves the winner.} 
        \label{fig:LenientComposition}
    \end{figure}

    \begin{figure}[H]
        \centering
        \includegraphics[scale=1.0]{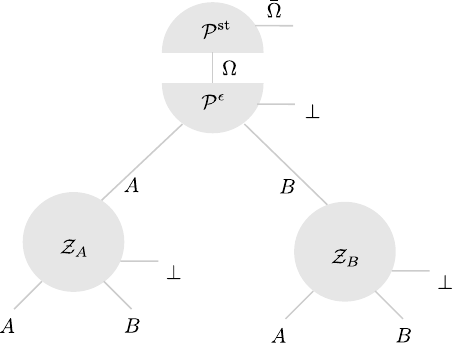}
        \caption{The abort-phobic composition $C^{\perp L}(\calP,\calZ)$ is shown above where $\Peps$ can output abort.} 
        \label{fig:AbortPhobicComposition}
    \end{figure}

\end{proof}

\section{Acknowledgements}
ASA is grateful for the hospitality of Perimeter Institute where part of this work was carried out. Research at Perimeter Institute is supported in part by the Government of Canada through the Department of Innovation, Science and
Economic Development Canada and by the Province of Ontario through the Ministry of Economic
Development, Job Creation and Trade. ASA further acknowledges the FNRS (Belgium) for support through the FRIA grants, 3/5/5 --
MCF/XH/FC -- 16754 and F 3/5/5 -- FRIA/FC -- 6700 FC 20759. A part of this work was carried out while ASA was at the Universit\'{e} libre de Bruxelles, Belgium. 
ASA acknowledges support from IQIM, an NSF Physics Frontier Center (GBMF-1250002), MURI grant FA9550-18-1-0161 and the U.S. Department of Defense through a QuICS Hartree Fellowship. 
Part of the work was carried out while ASA was visiting the Simons Institute for the Theory of Computing. TVH acknowledges support from ParisRegionQCI, supported by Paris Region; 
FranceQCI, supported by the European Commission, under the Digital Europe program; QSNP: 
European Union's Horizon Europe research and innovation program under the project ``Quantum Security Networks Partnership''.
JS is partially supported by Commonwealth Cyber Initiative SWVA grant 467489.

\pagebreak

\bibliographystyle{amsalpha}
\bibliography{DI_WCF_ideas}

\pagebreak

\appendix

\section{Device independence and the box paradigm}
\label{sec:BoxParadigm}
\branchcolor{black}{We describe device independent protocols as classical protocols with one modification: we assume that the two parties can exchange
boxes and that the parties can shield their boxes (from the other
boxes i.e. the boxes can't communicate with each other once shielded). 
}

\begin{defn}[Box]
 \label{def:box}A \emph{box} is a device that takes an input $x\in\mathcal{X}$
and yields an outputs $a\in\mathcal{A}$ where $\mathcal{X}$ and
$\mathcal{A}$ are finite sets. Typically, a set of $n$ boxes, taking
inputs $x_{1},x_{2},\dots x_{n}$ and yielding outputs $a_{1},a_{2}\dots a_{n}$
are \emph{characterised} by a joint conditional probability distribution,
denoted by 
\[
p(a_{1},a_{2}\dots a_{n}|x_{1},x_{2}\dots x_{n}).
\]
Further, if $p(a_{1},a_{2}\dots a_{n}|x_{1},x_{2}\dots x_{n})=\tr\left[M_{a_{1}|x_{1}}^{1}\otimes M_{a_{2}|x_{2}}^{2}\dots\otimes M_{a_{n}|x_{n}}^{n}\rho\right]$
then we call the set of boxes, \emph{quantum boxes}, where $\{M_{a'|x'}^{i}\}_{a'\in\mathcal{A}_{i}}$constitute
a POVM for a fixed $i$ and $x'$, $\rho$ is a density matrix and
their dimensions are mutually consistent.
\end{defn}

Henceforth, we restrict ourselves to quantum boxes. 
\begin{defn}[Protocol in the box formalism]
 \label{def:BoxProtocol}A generic two-party protocol in the box
formalism has the following form:
\begin{enumerate}
\item Inputs:
\begin{enumerate}
\item Alice is given boxes $\Box_{1}^{A},\Box_{2}^{A}\dots\Box_{p}^{A}$
and Bob is given boxes $\Box_{1}^{B},\Box_{2}^{B},\dots\Box_{q}^{B}$. 
\item Alice is given a random string $r^{A}$ and Bob is given a random
string $r^{B}$ (of arbitrary but finite length).
\end{enumerate}
\item Structure: At each round of the protocol, the following is allowed.
\begin{enumerate}
\item Alice and Bob can locally perform arbitrary but finite time computations
on a Turing Machine. 
\item They can exchange classical strings/messages and boxes.
\end{enumerate}
\end{enumerate}
\branchcolor{black}{A protocol in the box formalism is readily expressed as a protocol
which uses a (trusted) classical channel (i.e. they trust their classical
devices to reliably send/receive messages), untrusted quantum devices
and an untrusted quantum channel (i.e. a channel that can carry quantum
states but may be controlled by the adversary).}
\end{defn}

\begin{assumption}[Setup of Device Independent Two-Party Protocols]
 Alice and Bob 
\begin{enumerate}
\item both have private sources of randomness,
\item can send and receive classical messages over a (trusted) classical
channel,
\item can prevent parts of their untrusted quantum devices from communicating
with each other, and
\item have access to an untrusted quantum channel.
\end{enumerate}
\end{assumption}

\branchcolor{black}{We restrict ourselves to a ``measure and exchange'' class of protocols---protocols
where Alice and Bob start with some pre-prepared states and subsequently,
only perform classical computation and quantum measurements locally
in conjunction with exchanging classical and quantum messages. More
precisely, we consider the following (likely restricted) class of
device independent protocols.}
\begin{defn}[Measure and Exchange (Device Independent Two-Party) Protocols]
\label{def:MEprotocol} A \emph{measure and exchange (device independent
two-party) protocol} has the following form:
\begin{enumerate}
\item Inputs:
\begin{enumerate}
\item Alice is given quantum registers $A_{1},A_{2},\dots A_{p}$ together
with POVMs\footnote{For concreteness, take the case of binary measurements. By $\{M_{a|x}^{A_{1}}\}_{a}$,
for instance, we mean $\{M_{0|x}^{A_{1}},M_{1|x}^{A_{1}}\}$ is a
POVM for $x\in\{0,1\}$. } 
\[
\{M_{a|x}^{A_{1}}\}_{a},\{M_{a|x}^{A_{2}}\}_{a},\dots\{M_{a|x}^{A_{p}}\}_{a}
\]
which act on them and Bob is, analogously, given quantum registers
$B_{1},B_{2},\dots B_{q}$ together with POVMs 
\[
\{M_{b|y}^{B_{1}}\}_{b},\{M_{b|y}^{B_{2}}\}_{b},\dots,\{M_{b|y}^{B_{q}}\}_{b}.
\]
Alice shields $A_{1},A_{2},\dots A_{p}$ (and the POVMs) from each
other and from Bob's lab. Bob similarly shields $B_{1},B_{2}\dots B_{q}$
(and the POVMs) from each other and from Alice's lab.
\item Alice is given a random string $r^{A}$ and Bob is given a random
string $r^{B}$ (of arbitrary but finite length).
\end{enumerate}
\item Structure: At each round of the protocol, the following is allowed.
\begin{enumerate}
\item Alice and Bob can locally perform arbitrary but finite time computations
on a Turing Machine.
\item They can exchange classical strings/messages.
\item Alice (for instance) can 
\begin{enumerate}
\item send a register $A_{l}$ and the encoding of her POVMs $\{M_{i}^{A_{l}}\}_{i}$
to Bob, or
\item receive a register $B_{m}$ and the encoding of the POVMs $\{M_{i}^{B_{m}}\}_{i}$. 
\end{enumerate}
Analogously for Bob. 
\end{enumerate}
\end{enumerate}
\end{defn}

It is clear that a protocol in the box formalism (\Defref{BoxProtocol})
which uses only quantum boxes (\Defref{box}) can be implemented as
a measure and exchange protocol (\Defref{MEprotocol}).
  
\end{document}